
\documentclass[12pt, draftclsnofoot, onecolumn]{IEEEtran}

\normalsize
\usepackage{amsmath, amsfonts, amssymb, amsthm}

\usepackage{optidef}
\usepackage{color}
\usepackage{bm}
\usepackage{tikz,pgfplots}
\usepackage{algorithm}
\usepackage{cite}
\usepackage{algpseudocode}
\usepackage{siunitx}
\usepackage{subcaption}
\usepackage{standalone}
\usepackage{enumitem}

\usepackage{accents}

\newcommand{\iif}[1] {\ \mathbf{1} {\big({#1}\big)}  }
\newcommand{\eexp}[1] {\mathbb{E}\left\{{#1}\right\}}

\DeclareMathOperator*{\argmax}{arg\,max}
\DeclareMathOperator*{\argmin}{arg\,min}

\def \alg {EECW}
\def \du {\Delta U(t)}
\def \de {\Delta E(t)}
\def \dum {\Delta U_{max}}
\def \dem {\Delta E_{max}}
\def \oot {{1 \over 2}}
\def \C {\mathcal{C}}
\def \Ut {U_0}
\def \Am {A_{\text{m}}}
\def \Mumax {\mu_{\text{m}}}
\def \Pam {P_{\text{AP}}^{\text{max}}}
\def \Pumax{P_{\text{um}}}
\def \Pm {P_{\text{WN}}^{\text{max}}}

\def \No {N_o}
\def \Ni {N_i}
\def \Cm {C_{\text{m}}}

\def \uo {\mu^{\text{out}}_{n,s}(t)}
\def \ui {\mu^{\text{in}}_{n,s}(t)}
\def \uio {\mu^{\text{i-o}}_{n,s}(t)}
\def \eo {\phi^{\text{out}}_n(t)}
\def \ei {\phi^{\text{in}}_n(t)}
\def \eio {\phi^{\text{i-o}}_n(t)}
\def \eitau {\phi^{\text{in}}_n(\tau)}
\def \Eh {E_n(t)}

\def \uimax {\mu_{\text{max}}}

\def \uitau {\mu^\text{in}_{n,s}(\tau)}

\def \uos {\mu^\text{out}_{n,s}}
\def \uis {\mu^\text{in}_{n,s}}
\def \uios {\mu^{i-o}_{n,s}}

\def \eos {\phi^\text{out}_n}
\def \eis {\phi^\text{in}_n}
\def \eios {\phi^{i-o}_n}

\def \eimax {\phi_{\text{max}}}

\def \Rnode {N_t(l)}
\def \Tnode {N_h(l)}

\def \Td {\tau_d(t)}
\def \Te {\tau_e(t)}
\def \Tf {\tau_f}

\def \Uc {\mathcal{U}_c}
\def \Ec {\mathcal{E}_c}

\def \Out {\mathcal{O}}
\def \In {\mathcal{I}}

\newtheoremstyle{note}
{3pt}
{3pt}
{}
{}
{\bfseries\upshape}
{.}
{.5em}
{}

\theoremstyle{note}
\newtheorem{lemma}{Lemma}
\newtheorem{theorem}{Theorem}

\usetikzlibrary{arrows}
\usetikzlibrary{arrows,decorations.pathmorphing,backgrounds,positioning,fit,petri,shapes.geometric}
\title{An Energy-Efficient  Controller for Wirelessly-Powered Communication Networks }
	

\author{ \IEEEauthorblockN{Mohammad Movahednasab, Behrooz Makki, \textit{Member}, \textit{IEEE}, Naeimeh Omidvar, \textit{Member}, \textit{IEEE} Mohammad Reza Pakravan, \textit{Member}, \textit{IEEE}, Tommy Svensson, \textit{Senior Member}, \textit{IEEE} and Michele Zorzi, \textit{Fellow}, \textit{IEEE}}\thanks{Part of this work has been accepted for  presentation at ICC 2019.}}

\begin{document}

\maketitle

\begin{abstract}
 In a wirelessly-powered communication network (WPCN), an energy access point (E-AP) supplies the energy needs of the network nodes through radio frequency wave transmission, and  the nodes store their received energy in their batteries for  possible data transmission. In this paper, we propose an online control policy for energy transfer from the E-AP to the wireless nodes and for data transfer among the nodes. With our proposed control policy, all  data queues of the nodes are stable, while the average energy consumption of the network is shown to be within a bounded gap of the minimum energy  required for stabilizing the network.  Our  proposed policy is designed using  a quadratic Lyapunov function to capture the limitations on the energy consumption of the nodes imposed by their battery levels.  We show that under the proposed control policy, the backlog level in the data queues and the stored energy level in the batteries fluctuate in small intervals around some constant levels.
 Consequently, by imposing negligible average data drop rate,   the data buffer size and the battery capacity of the nodes can be significantly reduced. 
\end{abstract}

\section{Introduction}
Smart electronic devices are increasingly making their way into our daily life. It is predicted that by 2021, there will be around 28 billion connected devices  all over the world \cite{Ab2016}, a great number of which will be portable and battery-powered. However, in some applications such as biomedical implants inside human bodies \cite{Zeng2016} or distributed monitoring  sensors, replacing the batteries may be infeasible. As such, the problem of providing the required energy for the portable battery-operated devices has recently received growing attention, both in academia and industry \cite{Huang2015,Zeng2016}.  Particularly, the idea of charging batteries over the air is considered as a promising solution which guarantees an uninterrupted connection and  reduces the problem of massive battery disposal.   The key enabling technology for charging over the air is  wireless energy transfer (WET). There are various WET methods including electromagnetic radiation \cite{Huang2015}, resonant coupling \cite{Resonant2009} and inductive coupling \cite{Inductive2013}. Compared to the two latter methods, electromagnetic radiation provides a wider coverage range and is more flexible for transmitter/receiver deployment and movement  \cite{Zeng2016}.  

 There are numerous studies on energy beamforming as a technique for alleviating the high transmission path loss in   wirelessly-powered communication networks (WPCNs)  \cite{Yang2014, Liu2014,Nguyen2017,Lau2014} as well as for the simultaneous wireless information and power transfer systems  \cite{Larsson2013,Nasir2016, Shi2014}. Moreover, \cite{Ju2014,Bafghi2017,Haghifam2016,Haghifam2016_2,2016arXiv161105995H,Liu2016,Xu2017,Gurakan2016} consider  cooperation among the users as a useful method to increase the network coverage in two-hop \cite{Ju2014,Bafghi2017,Haghifam2016,Haghifam2016_2,2016arXiv161105995H,Liu2016} and  multi-hop  \cite{Xu2017,Gurakan2016} WPCNs, respectively.
Furthermore, \cite{Behrooz2016} and \cite{Reliable2} study the reliability of data transmission in WPCNs.

 In the networks that support  continuous or regular communication, the nodes are equipped with batteries which enable {them} to store their harvested energy in one time-slot for possible use in the subsequent time-slots \cite{Biason2017,Biason2015,Choi2015,Choi2018}.  In  such cases,  the network performance should be analyzed in the {long term}, because  a single time-slot analysis may not be optimal in general. For this reason, \cite{Biason2017} and \cite{Biason2015}  study the long-term network throughput optimization through Markov decision processes (MDP).  Moreover, in \cite{Choi2015,Choi2018,RezaeiPIMRC,RezaeiGlobe,RezaeiArxiv},  long-term energy optimality and fairness for a multi-user downlink WET scenario are studied through Lyapunov optimization technique.

In this work, we design an energy-efficient WET policy that jointly controls data-link power allocation, data routing, energy beamforming and  data/energy  transmission time sharing in a multi-hop WPCN.  The problem is cast in the form of minimizing the total average  energy consumption of the network subject to stability of the data queues in the network and the battery level constraints of the nodes.  The battery level  constraint complicates  finding the optimal control policy, since high energy consumption {in} one time-slot  degrades the battery level of the node considerably, which may lead to  energy outage in the subsequent time-slots.  Therefore, the optimal decisions {in} different time-slots are  coupled.  This coupling makes finding the optimal policy  challenging.

We use  Lyapunov optimization method with a novel quadratic Lyapunov function  to avoid energy outage. Based on the proposed Lyapunov function, we  propose an online control policy called  energy-efficient controller for WPCN (\alg{}), that does not require the explicit knowledge of the channel statistics. With the proposed policy, the  time-slots are  devoted to either energy transfer or data transmission.  In  energy transmission time-slots, the energy beam is focused towards the nodes with low battery levels, higher queue backlogs and  higher energy-link channel gains. In data transmission time-slots, the data is routed through the  nodes with less congested queues and higher battery levels.  

We  analyze the performance of the proposed control policy and  prove that for every arbitrarily chosen value of a parameter $V > 0$ in our algorithm, the energy consumption under  \alg{}  is within  a bounded gap of the order of   $\mathcal{O}({1\over V})$ to the optimal policy, while the average backlog of data queues is upper bounded by $\mathcal{O}(V)$.  In addition, we show that the backlog level of the data queues and the energy level of the batteries    converge probabilistically to some constant values, with the probability of deviation from those values decreasing exponentially with respect to the amount of deviation. Using this result, we further propose a   modified version of \alg{} which can significantly reduce the required size of the data buffers in the nodes as well as the required capacity of their batteries, while  imposing a negligible  drop rate in the network. Finally, we present extensive simulations to provide insightful intuitions on the advantages of the proposed control policy, in terms of its energy requirements for stabilizing a WPCN.

As opposed to \cite{Yang2014, Liu2014,Nguyen2017,Lau2014,Larsson2013,Nasir2016,Shi2014,Ju2014,Bafghi2017,Haghifam2016,Haghifam2016_2,2016arXiv161105995H,Liu2016,Xu2017,Gurakan2016}, we  consider battery-powered nodes and analyze the network performance in the {long term}, instead of a single time-slot analysis.  In contrast to   \cite{Yang2014, Liu2014,Nguyen2017,Larsson2013,Nasir2016,Shi2014,Ju2014,Bafghi2017,Haghifam2016,Haghifam2016_2,2016arXiv161105995H,Liu2016,Xu2017,Gurakan2016,RezaeiPIMRC,RezaeiGlobe,RezaeiArxiv,Biason2015,Biason2017,Behrooz2016,Reliable2},
 {which analyze the throughput, the delay or the outage probability, we study the optimization of the energy consumption while the data queues are stabilized.}  Finally, different form  \cite{Behrooz2016,Reliable2,Yang2014,Liu2014,Nguyen2017,Lau2014,Larsson2013,Nasir2016,Choi2015,Choi2018,RezaeiPIMRC,RezaeiGlobe,RezaeiArxiv,Biason2015,Biason2017}, we study a general multi-hop WPCN where the data should be routed through the nodes. The differences in the system model and the problem formulation makes our results and  analysis completely different from those in \cite{Yang2014, Liu2014,Nguyen2017,Lau2014,Larsson2013,Nasir2016,Shi2014,Ju2014,Bafghi2017,Haghifam2016,Haghifam2016_2,2016arXiv161105995H,Liu2016,Xu2017,Gurakan2016,Choi2015,Choi2018,RezaeiPIMRC,RezaeiGlobe,RezaeiArxiv,Biason2015,Biason2017,Behrooz2016,Reliable2}.

The rest of the paper is organized as follows.  The considered system model and our problem formulation are illustrated in Section \ref{sec:sysMo}. {The proposed control policy as well as its performance analysis is presented in Section \ref{sec:policy}}.     The behavior of the data backlog in the queues and the energy level in the batteries are analyzed in Section \ref{sec:behavior}. Some implementation issues are discussed in Section \ref{sec:Implementation}. Simulation results are presented in Section \ref{sec:sim}, and finally, Section \ref{sec:conclude} concludes the paper. 

Notation: Matrices and vectors are denoted by small and capital boldface letters, respectively. Moreover, unless otherwise mentioned, vectors are single row-matrices. Also, $(.)^T$, $(.)^H$ and $(.)^\ast$   denote  transpose,   conjugate transpose and  element-wise conjugate of a matrix, respectively.  Finally, $|.|$ denotes the absolute value (or the modulus for  complex numbers),  $\lVert.\rVert$ denotes the norm of  vectors, $\mathbb{E}\{.\}$ represents the expectation and  $[x]^{+}=\max\{x,0\},\;\forall x \in \mathbb{R}$.





\section{System Model}\label{sec:sysMo}

\begin{table}

\caption{Notation Summary.}
\vspace{-0.50cm}
\begin{center}
\resizebox{0.95\textwidth}{!}{
\begin{tabular}{  |c|c|}

\hline
Symbol                  				& Definition   \\ \hline
$N, S, L$                       			& Number of nodes, streams and data links, respectively. \\\hline
$U_{n,s}(t)$		 				& The backlog of data queue allocated to stream $s$ at node $n$ in time-slot $t$.\\\hline
$ B_n(t)$               				& The battery level of node $n$ in time-slot $t$. \\\hline
$\Tnode,\Rnode$ 		              &The head  and tail node of  link $l$, respectively. \\\hline
$ \Out_n, \In_n$              			&The set of outgoing links from and  incoming links  to node $n$, respectively. \\\hline
$\ei, \eo  $               	       			& The energy stored in and drained from  the battery of node $n$ in time-slot $t$, respectively.\\\hline
$\eimax $               	       		       & The limitation on $\ei$ and $\eo$ (i.e., $\ei, \eo \leq \eimax\;\; \forall t$) \\\hline
$ \bm{w}(t), P_{\text{AP}}(t)$						& The beamforming vector and the transmission power of the E-AP in time-slot $t$, respectively. \\ \hline
$A_{n,s}(t)$               				& {The instantaneous data of stream $s$ arrived at node $n$, in time-slot $t$}.\\\hline
$\lambda_{n,s} $					&{The data arrival rate of stream $s$ at node $n$}.\\\hline
$\ui,\uo $							& The total amount of data of stream $s$ that enters to and exits from node $n$ in time-slot $t$, respectively.\\\hline
$\uimax $						& The limit on $\ui$ and $\uo$ (i.e., $\ui, \uo \leq \uimax\;\; \forall t$)\\\hline
$\bm{g}(t)= [g_1(t),\ldots,g_L(t)]$		& The vector of  data link channel states in time-slot $t$. \\ \hline
$\bm{h}_n(t) = [h_n^1(t),\dots,h_n^M(t)]$					& The vector of  energy link channel gains for node $n$ in time-slot $t$. \\ \hline
$\bm{p}(t)= [p_1(t),\ldots,p_l(t)]$						& The vector of power allocations to data links in time-slot $t$. \\ \hline

$\Pi$							& The set of feasible data-link power allocation vectors.\\ \hline
$R_l(\bm{p}(t), \bm{g}(t))$			& The rate-power function of link $l$ in time-slot $t$. \\ \hline
$R_{l,s}(t)$						& The instantaneous data rate of stream $s$ over link $l$ in time-slot $t$. \\ \hline
$\Pm, \Pam $	 				& Maximum admissible transmission power of the wireless nodes and the E-AP, respectively.\\ \hline
$\Tf, \Td,\Te$                                     & Time-slot duration and fraction of time-slot devoted to data and energy transmission,  respectively.\\\hline
\end{tabular}}
\end{center}
\vspace{-0.5cm}
\label{table:notation}
\end{table}

The notations used in the paper along with their definitions are presented in Table \ref{table:notation}. We consider a WPCN consisting of one energy access point (E-AP) and $N$ wireless nodes, with $S$ streams of data between distinct endpoints in the network. { It should be noted that our analysis can be extended to consider multiple E-APs. However, for simplicity  we focus on networks with a single E-AP.} The wireless nodes are battery-powered, and the batteries are recharged by the energy received from the E-AP. The E-AP is equipped with $M$ antennas to focus its transmission beam towards the nodes. Moreover, we assume that the nodes use a single antenna for both energy reception and data transmission/reception.
 There exist $N$ energy links between the E-AP and the nodes and $L$  data links between the nodes. The topology of a sample network  is depicted in Fig. \ref{fig:sampNetA}. 
For each data link $l\in\{1, \ldots, L\}$, $\Tnode$ and $\Rnode$ denote the head node and the tail node of   link $l$, respectively. Moreover, we define $\mathcal{I}_n$ and $\mathcal{O}_n$ as  the sets of the incoming and outgoing data links of node $n$, respectively. 
\begin{figure}
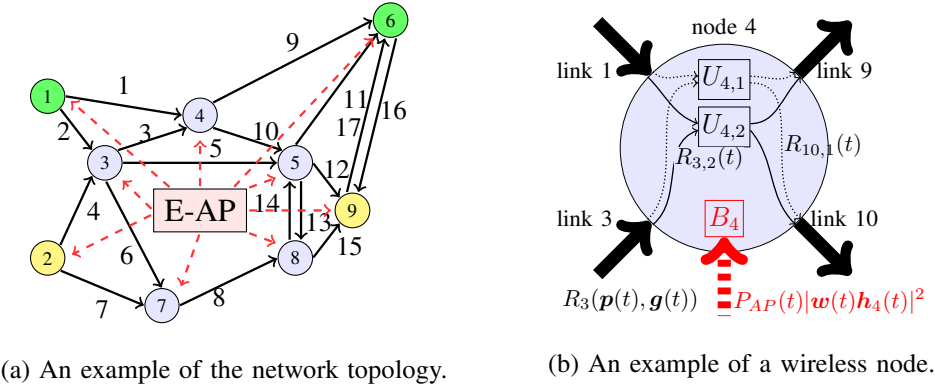

\centering
\begin{subfigure}{0.39\textwidth}
\centering
  \includestandalone[width = 0.8\textwidth]{sampleTopol_Lnum}
\caption{An example of the network topology.}
\label{fig:sampNetA}
\end{subfigure}
~
\begin{subfigure}{0.39\textwidth}
\centering
  \includestandalone[width = 0.8\textwidth]{WirelessNode}
\caption{An example of a wireless node.}
\label{fig:sampNode}
\end{subfigure}
\vspace{-0.2cm}
\caption{Sub-figure (a) shows an example  of the  network topology. The solid black and  dashed red arrows represent the data links and the energy links, respectively. In this example figure, there are two data streams  between  nodes 1 and 6 and  nodes 2 and 9,  i.e., the objective is to send the messages of node 1 (resp. 2) to node 6 (resp. 9). Sub-figure (b) shows the structure of node 4. It consists of two data queues and a battery. In this figure, $C_{10,1}(t)$ and $C_{3,2}(t)$   are the data rates assigned to stream 1 and stream 2 over  links 10 and 3, respectively. }\label{fig:sampNet}\vspace{-0.5cm}
\end{figure}
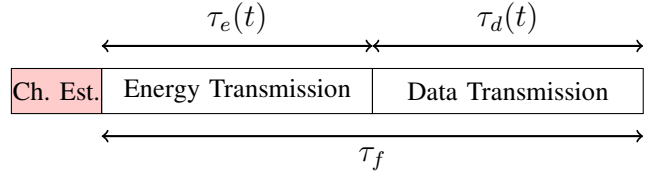
\begin{figure}
\centering
\begin{tikzpicture}[scale = 0.6]
\draw (-2,0) [fill=red!20] rectangle node{\small Ch. Est. } (0,1);
\draw (0,0)   rectangle node{\small Energy Transmission} (6,1);
\draw (6,0)  rectangle node{\small Data Transmission} (12,1);
\draw[<->,thick] (0,1.5) -- (6,1.5) node[midway,above] { $\Te$};
\draw[<->,thick] (6,1.5) -- (12,1.5) node[midway,above] { $\Td$};
\draw[<->,thick] (0,-0.5) -- (12,-0.5) node[midway,below] {$\Tf$};
\end{tikzpicture}
\vspace{-0.3cm}\caption{A time-slot structure.}\vspace{-0.7cm}
\label{fig:time}
\end{figure}

The time horizon is divided into time-slots with fixed length, indexed by $t$. Figure \ref{fig:time} shows the structure of a time-slot. At the beginning of each time-slot $t$, a small interval  is devoted to channel estimation and control signaling. The rest of the time-slot  is  divided into two intervals  of  lengths $\Te$ and $\Td$, for energy and data transmission, respectively.  We have $\Te+\Td = \Tf$ where $\Tf$ is the fixed  portion of the time-slot  allocated for data and energy transmission. 

The channel coefficients are assumed to be constant during a time-slot but vary randomly and independently in successive time-slots.  Recall that  the E-AP has multiple antennas whereas the wireless nodes use a single antenna for both energy reception and data transmission/reception. In each time-slot $t$, $g_l(t)$ and $h_n^m(t)$ represent the channel gains of the link between  nodes $\Tnode$ and $\Rnode$  and the link between the $m$-th antenna of the E-AP and  node $n$, respectively. Accordingly, we define  $\bm{g}(t) \triangleq (g_1(t),\ldots, g_L(t))$ and $\bm{h}_n(t) \triangleq (h_n^1(t), \ldots, h_n^M(t))$ as the channel gain vectors for data links and energy link of node $n$, respectively.  Note that here the energy links are numbered according to the ID of the   energy receiving node whereas  the data links are independently numbered.


\vspace{-0.2cm}
\begin{description}[leftmargin=0pt, listparindent=\parindent]
\item{\textbf{Data and Energy Transmission}}
Let $\bm{p}(t) \triangleq (p_1(t), \ldots,p_L(t))$ denote the data-link power vector, in which the $l$-th entry  is the allocated transmission power over the $l$-th data link. 
Moreover, let $\Pi$ denote the finite set of all feasible power vectors. We assume that setting an element of a power vector in $\Pi$ to zero results in a new power vector that also belongs to  $\Pi$. Furthermore, we assume that the maximum total transmit power of each node is limited to $\Pm$. Let $R_l(\bm{p}(t), \bm{g}(t))$ denote the rate-power function in link $l$ under the allocated data-link power vector $\bm{p}(t)$ and the channel gain vector $\bm{g}(t)$.  Consider two feasible power vectors $\bm{p}(t)$ and $\tilde{\bm{p}}(t)$, where $\tilde{p}_{l^\prime}(t)  < p_{l^\prime}(t) $ and $\tilde{\bm{p}}_{l}(t) = p_{l}(t),\forall l\neq {l^\prime}$. We assume that the rate-power functions  under each of these two power vectors satisfy the following properties
 \begin{align}
& \text{ if } \tilde{p}_{l^\prime}(t) =0 \text{ then }R_{l^\prime}(\tilde{\bm{p}}(t), \bm{g}(t)) = 0,\label{eq:noPowerNoRate}\\
& \exists \delta \geq 0: R_{l^\prime}(\bm{p}(t), \bm{g}(t)) - R_{l^\prime}(\tilde{\bm{p}}(t), \bm{g}(t)) \leq \delta (p_{l^\prime}(t)- \tilde{p}_{l^\prime}(t)),\label{eq:rateBound}\\
& R_l(\bm{p}(t), \bm{g}(t)) \leq R_{l}(\tilde{\bm{p}}(t), \bm{g}(t)) \;\; \forall {l} \neq {l^\prime}.\label{eq:interference}
\end{align}
{As an example of $R_l(\bm{p}(t), \bm{g}(t))$, the reader may think of 
\begin{align}\label{eq:sampleRatePower}
R_l(\bm{p}(t), \bm{g}(t))= \log\left(1 + { p_l(t)|g_l(t)|^2 \over N_0 + \sum_{l^{'} \in \In_{N_t(l)}/l} |g_{l^{'}}(t)|^2p_{l^{'}} (t)  }\right),
\end{align}
where $\In_{N_t(l)}/l$ is the set of the links that interfere with  link $l$.   Equation \eqref{eq:sampleRatePower} is an appropriate approximation for the achievable rates in the cases with codewords of moderate/large length.}  However, as we show in Section \ref{sec:sim}, our analysis is also applicable to the case of codewords of finite length, which results in a different rate-power function. Note that the  properties \eqref{eq:noPowerNoRate}, \eqref{eq:rateBound} and \eqref{eq:interference} are easily satisfied by conventional rate-power functions. Equation \eqref{eq:noPowerNoRate} indicates that no data can be passed through a data link if no power is assigned to that link.  Inequality \eqref{eq:rateBound} is satisfied by functions with bounded first derivative. Finally, inequality \eqref{eq:interference} holds due to the interference effect among wireless links. Let $R_{l,s}(t)$ denote the transmission rate allocated to stream $s$ in  link $l$. The sum rate of all streams in link $l$ should not exceed the achievable rate of that link. Therefore, a feasible rate allocation scheme should satisfy $\sum_{s=1}^SR_{l,s}(t) \leq R_l(\bm{p}(t), \bm{g}(t)).$

 The E-AP performs energy beamforming  to concentrate its transmit energy towards the nodes.  Let 
$\bm{w}(t) \triangleq (w_1(t), \ldots,w_M(t))$ denote the normalized beamforming vector of the E-AP.  Accordingly, the received energy at each node $n$, denoted by $\Eh$, is  given by
\vspace{-0.15cm}
\begin{align}
\Eh \triangleq \left|\bm{w}(t)\bm{h}_n^T(t)\right|^2 E_{\text{AP}}(t)\;\; \forall n,
\label{eq:recPower}
\vspace{-0.20cm}
\end{align}
where $E_{\text{AP}}(t) \triangleq \Te P_{\text{AP}}(t)$ is the E-AP's transmitted energy and  $P_{\text{AP}}(t)$ is the E-AP's transmit power in time-slot $t$. The peak transmission power of the E-AP is limited to  $\Pam$, i.e., $P_{\text{AP}}(t) \in [0,\Pam].$

%
%

%
%
%

\item{\textbf{Wireless Nodes} }
As shown in Fig.  \ref{fig:sampNode}, each node includes  $S$ data queues and a battery.  Let $U_{n,s}(t)$ denote the level of the stored data for stream $s$  in node $n$ at the beginning of  time-slot $t$. Moreover, let $\ui$ and $\uo$ denote the number of data units of stream $s$ that enter to and exit from node $n$ during time-slot $t$, respectively.  Accordingly,  $U_{n,s}(t)$ evolves as \vspace{-0.15cm}
\begin{align}
U_{n,s}(t+1) = \bigg [ U_{n,s}(t) - \uo\bigg]^{+}+\ui.
\label{eq:queueEvolve}
\end{align}
The  parameters $\uo$ and $\ui$ are determined through the assigned transmission rates and the external data arrivals, that is, 
\vspace{-0.25cm}
\begin{align}
\uo \triangleq \Td\sum_{l \in \mathcal{O}_n} R_{l,s}(t),
\vspace{-0.25cm}
\end{align}
and 
\vspace{-0.25cm}
\begin{align}
\ui \triangleq \Td\sum_{l \in \mathcal{I}_n} R_{l,s}(t) +A_{n,s}(t),
\end{align}
where $A_{n,s}(t) \in [0,\Am]$ is the number of external data units of stream $s$ that enter    node $n$ in time-slot $t$.    We assume that $A_{n,s}(t)$ is a random variable  following an identical and independent distribution  in different time-slots. We denote  the mean value of $A_{n,s}(t)$ by $\lambda_{n,s}$, and we have  $\lambda_{n,s} =\lambda_s$ if node $n$ is the source of stream $s$ and $\lambda_{n,s} =0$ otherwise. We denote   the vector of arrival rates by $\bm{\lambda} = [\lambda_1, \ldots,\lambda_S ] $.  Furthermore,   we assume that $\ui$ and $\uo$ are both upper bounded  by $\uimax$.
\color{black}

{The battery of each node  is recharged by the energy received from the E-AP and is (partially) discharged when the node transmits data. Let $B_n(t)$ denote the energy level stored in the battery of node $n$ at the beginning of time-slot $t$. Therefore, the battery level at node $n$ evolves according to \vspace{-0.15cm}}
%
\begin{align}\label{eq:batteryEvolve}
B_n(t+1) = B_n(t) - \eo+\ei,\vspace{-0.10cm}
\end{align}
where $\eo =  \Td\sum_{l \in \Out_n} p_l(t)$ is the total energy consumption of node $n$ in time-slot $t$ and $\ei$ is the portion of the received energy that is stored in the battery of node $n$ in time-slot $t$. 
Intuitively, we expect that a node stores all  its received energy  from the E-AP, i.e.,  $\ei = \Eh$, but due  to  the wide transmission beam of the E-AP some nodes may receive  more energy  than  they need. Specifically, in some network topologies, the nodes with low energy consumption may receive parts of the energy that is transmitted towards the nodes with higher energy consumption. Accordingly, the stored energy in the batteries of the low energy consumption nodes may grow unbounded. Thus, we let   $\ei \leq \Eh$. That is, the nodes may store only a portion of their received energy. We further assume that $\Td\sum_{l \in \Out_n} p_l(t)$ and $\Eh$ are both upper bounded by $\eimax$, and, consequently, we have $\ei \leq \eimax$ and $ \eo \leq \eimax$.


\item{\textbf{Network Controller}} 
 There exists a  network controller, located at the E-AP, that controls  both the data and the energy links, having access to channel state information (CSI) and the level of the stored data and energy in the queues and batteries  of all  nodes\footnote{We ignore the cost of the nodes sending feedback to the E-AP to inform it about the CSI and stored data/energy. }. The network controller schedules data/energy transmission time sharing by specifying $\Te$ and $\Td$. Moreover, it controls the energy links by specifying the E-AP transmission power $P_{\text{AP}}(t)$  and the beamforming vector $\bm{w}(t)$ and controls the data links  by determining their power vector $\bm{p}(t)$ and   data routing  through specifying $R_{l,s}(t)$. 
\end{description}

%
%
%

Let $E^{\text{opt} }(\bm{\lambda})$ denote the minimum achievable average energy consumption per time-slot of the E-AP over all stabilizing polices. We define  $E^{\text{opt} }(\bm{\lambda})$  as a function of $\bm{\lambda$} to emphasize  the dependency of the energy consumption  on the data arrival rate.  In this way, considering the battery level and the stability constraints,  $E^{\text{opt} }(\bm{\lambda})$  can be found as the solution of 
%
%
%
%
\begin{small}
\begin{mini!}
{\substack{\bm{w}(t),P_{\text{AP}}(t),\bm{p}(t),\\ R_{l,s}(t), \Te,\Td }} { \lim_{T \rightarrow \infty}\frac{1}{T}\sum_{t=0}^{T-1}\mathbb{E}\left\{E_{\text{AP}}(t)\right\}} {\label{prob:mainProbDef}}{E^{\text{opt} }(\bm{\lambda}) = }
\addConstraint{ \limsup_{T \rightarrow \infty}{1 \over T}\sum_{t = 0}^{T-1}\sum_{n,s}\mathbb{E}\left\{U_{n,s}(t)\right\} < \infty, \;\; \forall n,s \label{eq:stableConstraint}}
\addConstraint{\eo \leq B‌‌_{n}(t), \;\;\forall n,t \label{eq:batteryConstraint}}
\addConstraint{\bm{p}(t) \in \Pi,\;    \sum_{s=1}^SR_{l,s}(t) \leq R_l(\bm{p}(t), \bm{g}(t))\label{eq:dataRelatedConstraint}}
\addConstraint{P_{\text{AP}}(t) \in [0,\Pam],\; \lVert \bm{w}(t) \rVert = 1\label{eq:energyRelatedConstraint}}
\addConstraint{\Te+ \Td = \Tf.\label{eq:timingConstraint}}
\end{mini!}
\end{small}
Constraint \eqref{eq:stableConstraint} ensures a finite average backlog and, accordingly, finite average delay \cite[Chapter 2]{Neely2010}. Moreover,  Constraint \eqref{eq:batteryConstraint} is the battery level constraint, which guarantees that the  energy consumption of a node is not greater than the stored energy in the battery of  the node.   Constraints \eqref{eq:dataRelatedConstraint} and \eqref{eq:energyRelatedConstraint} are the restrictions on the data-links and the energy-links parameters, respectively, and \eqref{eq:timingConstraint} is the data/energy transmission time sharing constraint. 

  Problem \eqref{prob:mainProbDef} is a stochastic utility optimization problem.  In every time-slot $t$, the network controller observes the battery levels, the  queue backlogs, the instantaneous CSI as well as the external data arrival and determines the control action. Note that, since the control policy depends on the queue backlogs and the battery levels, the control actions are not necessarily stationary. This problem could be tackled by the  min drift plus penalty (MDPP) algorithm \cite[Chapter 4]{Neely2010}, if the energy consumption of the nodes were not restricted by the battery level.  However, the battery constraint  complicates our problem and makes it  challenging. This is mainly due to the fact that in the battery-operated case, consuming high energy in a specific time-slot may drastically reduce  the battery level  and affect  transmission in the following time-slots. Therefore, having the battery level constraint, policies  with independent decisions at each time-slot are not optimal, which is not acceptable in the  MDPP problem formulation.  
To handle the battery constraint, we define a  Lyapunov function which implicitly takes into account the energy restrictions of the network.  Then, we relax the battery constraint and follow the MDPP approach to design the control policy based on the new Lyapunov function. Finally, we show that the designed control policy conforms to the battery constraint.

\section{The Proposed Control Policy}\label{sec:policy}
In this section, we construct the \alg{}. The general idea behind the \alg{} is to prevent the queue backlog from growing large, while  the energy levels in the batteries of the nodes are kept at an appropriate level in proportion to their stored data backlog level. For this purpose,  we introduce the imbalance indicators $Z_{n}(t),\; \forall n$, as 
\begin{align}\label{eq:Zdef}
Z_{n}(t) \triangleq  \sum_s U_{n,s}(t) - \C B_n(t),
\end{align}
where   $\C\triangleq \frac{2\delta}{1- \frac{1}{\alpha}}$, for some $\alpha >1$. Note that $\C$ represents  {an energy to data conversion factor.} The  value of $Z_{n}(t)$ indicates the data/energy imbalance at  node $n$ in time-slot $t$. For constructing the \alg{}, we follow the  MDPP approach. In summary, we follow the following steps: 
\begin{enumerate}[leftmargin=*]
\item We define the Lyapunov function as
\begin{align}\label{eq:LyapunovFun}
L(t)\triangleq {1\over 2}\lVert \bm{U}(t)\rVert^2 + {1\over 2}\lVert \bm{Z}(t)\rVert^2,
\end{align}
where $\bm{U}(t) \triangleq [U_{n,s}(t), \forall n,s]$ and $\bm{Z}(t) \triangleq [Z_{n}(t), \forall n]$ are the vectors of the backlog level in the data queues and  the imbalance indicators, respectively.  
According to \eqref{eq:LyapunovFun}, the Lypunov function grows if the stored data level in the queues  and/or   the data/energy imbalance  increases. As a result, we intuitively expect  a stabilizing controller to prevent the Lyapunov function from growing large.  
\item We define the Lypunov drift function, which is the expected increment of the Lyapunov function in successive slots, i.e.,
\vspace{-0.4cm}
\begin{align}
\Delta(L(t)) \triangleq \eexp{L(t+1) - L(t)| \bm{U}(t), \bm{B}(t)},
\end{align}
where the expectation is with respect to the randomness in the data and the energy channel gains and the data arrivals.
\item We define the drift-plus-penalty function as 
\begin{align}
\Delta_p(L(t),V) \triangleq  \Delta(L(t)) +V \mathbb{E}\{   E_{\text{AP}}(t)| \bm{U}(t), \bm{B}(t)\}, \label{eq:driftPlusPenalty}
\end{align} 
where $V>0$  is a control parameter.{ We will derive an upper bound for  $\Delta_p(L(t),V)$ in Lemma \ref{lem:upperBound}.}

\item {The \alg{} is designed to approximately  minimize the upper bound obtained in Lemma \ref{lem:upperBound} subject to the instantaneous constraints \eqref{eq:energyRelatedConstraint},        \eqref{eq:dataRelatedConstraint} and \eqref{eq:timingConstraint} but without the battery constraint \eqref{eq:batteryConstraint}.}

\item In Lemma \ref{lem:AlgProp}, we show that the  \alg{} conforms to \eqref{eq:batteryConstraint}. Moreover, in Theorem \ref{th:performanceTh}, we show  that $E_{\text{AP}}(\lambda)$ obtained by our proposed \alg{} is within a bounded distance of $E^{\text{opt}}(\lambda)$, which depending on the considered value of $V$ can be arbitrarily low, and the average stored backlog in the queues satisfies \eqref{eq:stableConstraint}. 

\end{enumerate}

 The details of the analysis are explained as follows. First, Lemma  \ref{lem:upperBound}  derives an upper bound on $\Delta_p(L(t),V)$. 
%

\begin{lemma}\label{lem:upperBound}
 For the  drift-plus-penalty function \eqref{eq:driftPlusPenalty}, we have
\begin{align}
\begin{split}
 &\Delta_p(L(t),V) \leq   \mathcal{B}_0 + F(t) + \sum_{n,s}\eexp{e_{n,s}(t) | \bm{U}(t), \bm{B}(t) }(U_{n,s}(t)+Z_n(t)),
\end{split}\label{eq:dppUpperBound}
\end{align}
where
\begin{align}\label{eq:defFt}
\begin{split}
F(t) &\triangleq \eexp{VE_{\text{AP}}(t)}+ \sum_{n,s} \eexp{ \ui -\uo | \bm{U}(t), \bm{B}(t) } U_{n,s}(t)+ \\
& \quad\sum_{n} \mathbb{E}\bigg\{\sum_s\ui-\sum_s\uo - \C\left[\ei - \eo\right] | \bm{U}(t), \bm{B}(t)\bigg\} Z_{n}(t),
\end{split}
\end{align}
 $\mathcal{B}_0 =   N\times (S\uimax +\C\eimax)^2+ N\times S\times \uimax^2$ and $e_{n,s} = [\uo-U_{n,s}(t)]^+$.
\end{lemma}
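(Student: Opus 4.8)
The plan is to bound $\Delta_p(L(t),V)$ by squaring the queue and imbalance update equations, discarding nonnegative cross-terms and collecting the rest.

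First I would handle the data queues. Starting from \eqref{eq:queueEvolve}, using the standard inequality $([a-b]^+ + c)^2 \le a^2 + b^2 + c^2 - 2a(b-c)$ valid for $a,b,c \ge 0$, I get
\begin{align*}
U_{n,s}(t+1)^2 - U_{n,s}(t)^2 \le (\uo)^2 + (\ui)^2 - 2U_{n,s}(t)(\uo - \ui).
\end{align*}
Since $\uo, \ui \le \uimax$, the quadratic terms are each at most $\uimax^2$, contributing $2\uimax^2$ per $(n,s)$ pair to the constant; summed over $n,s$ this gives part of $\mathcal{B}_0$. Actually, to match the stated $\mathcal{B}_0$ one of the quadratic terms must be absorbed here (yielding $N S \uimax^2$) and the other kept to be re-expressed via $e_{n,s}(t)=[\uo - U_{n,s}(t)]^+$; I would rewrite $(\uo)^2 - 2U_{n,s}(t)\uo$ carefully so that $-2U_{n,s}(t)(\uo - \ui) + (\uo)^2$ becomes $2\ui U_{n,s}(t) - 2\uo U_{n,s}(t) + (\uo)^2$, and note $(\uo)^2 - 2\uo U_{n,s}(t) \le e_{n,s}(t)^2 + \ldots$ — the key algebraic identity being that $e_{n,s}(t) = [\uo - U_{n,s}(t)]^+$ captures exactly the excess that makes the $U_{n,s}(t+1)$ term deviate from $U_{n,s}(t) - \uo + \ui$. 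This produces the $e_{n,s}(t)(U_{n,s}(t) + Z_n(t))$ terms on the right of \eqref{eq:dppUpperBound}; the $Z_n(t)$ factor appears because the same $e_{n,s}(t)$ slack enters the $\bm{Z}$ update through $\sum_s U_{n,s}$.

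Next I would handle the imbalance indicators. From \eqref{eq:Zdef}, \eqref{eq:queueEvolve} and \eqref{eq:batteryEvolve},
\begin{align*}
Z_n(t+1) = \sum_s \left([U_{n,s}(t) - \uo]^+ + \ui\right) - \C\left(B_n(t) - \eo + \ei\right),
\end{align*}
and I would expand $Z_n(t+1)^2 - Z_n(t)^2$. Writing $Z_n(t+1) = Z_n(t) + \sum_s(\ui - \uo) - \C(\ei - \eo) + \sum_s([U_{n,s}(t)-\uo]^+ - U_{n,s}(t) + \uo)$ and noting the last correction term equals $\sum_s e_{n,s}(t)$, the cross term $2Z_n(t)\cdot(\text{everything})$ splits into the $F(t)$ contribution and the $e_{n,s}(t) Z_n(t)$ contribution; the squared magnitude of the increment is bounded by $(S\uimax + \C\eimax)^2$ using the per-slot bounds on $\ui,\uo,\ei,\eo$, and since $\sum_s e_{n,s}(t) \le S\uimax$ this is consistent. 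Summing over $n$ gives the $N(S\uimax + \C\eimax)^2$ piece of $\mathcal{B}_0$. Adding $V\mathbb{E}\{E_{\text{AP}}(t)\}$, dividing the squared terms by $2$ (canceling the $1/2$ in $L(t)$), and taking conditional expectations assembles \eqref{eq:dppUpperBound} with $F(t)$ exactly as in \eqref{eq:defFt}.

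The main obstacle is bookkeeping: getting the $e_{n,s}(t)$ terms to come out with the precise coefficient $(U_{n,s}(t)+Z_n(t))$ rather than something messier, and making sure every discarded term is genuinely nonnegative (in particular that $e_{n,s}(t)^2$ and the cross-products involving $\ui$ are handled so the residual constant is exactly $\mathcal{B}_0$ and not larger). I would be careful that the inequality $[a-b]^+ \le a - b + [b-a]^+$ (i.e. $[a-b]^+ = (a-b) + [b-a]^+$) is used as an identity where needed, so that no slack is lost beyond what is captured by $e_{n,s}(t)$. The rest — bounding quadratics by their per-slot caps and invoking linearity of conditional expectation — is routine.
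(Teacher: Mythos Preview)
Your approach is essentially the paper's: expand the squares using the identity $[U_{n,s}(t)-\uo]^+ = U_{n,s}(t)-\uo + e_{n,s}(t)$, bound the squared increments by $\uimax^2$ and $(S\uimax+\C\eimax)^2$, and collect the cross terms into $F(t)$ plus the $e_{n,s}(t)(U_{n,s}(t)+Z_n(t))$ residuals. The only structural difference is that the paper first proves the bound for general centers $\bm{\nu},\bm{\zeta}$ (an auxiliary lemma reused in the proof of Theorem~\ref{sec:attrTheorem}) and then sets them to zero; also note that your initial ``$b^2+c^2$'' route would give $2\uimax^2$ per $(n,s)$, so the tighter bound you need comes exactly from squaring the single increment $e_{n,s}(t)+\ui-\uo$ as you describe afterward.
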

\begin{proof}
See Appendix \ref{sec:proofUpperBound}.
\end{proof}
The terms in \eqref{eq:defFt} can be rearranged to  better demonstrate  $F(t)$ as a function of the control variables. Particularly, we write  
\vspace{-0.35cm}
\begin{align}\label{eq:defFtR}
F(t) = \tilde{F}(t)+\eexp{\C(\Eh - \ei) | \bm{U}(t), \bm{B}(t)}Z_n(t),
\end{align}
where
\begin{small}
\begin{align}
\begin{split}
\tilde{F}(t) &= \eexp{ \Td \left(\C\sum_{l=1}^L Z_{\Tnode}(t) p_l(t)-  \sum_{l=1}^L  \sum_s W_{l,s}(t)R_{l,s}(t) \right) \bigg \vert \bm{U}(t), \bm{B}(t)}\\
&+ \eexp{ \Te P_{\text{AP}}(t) \bigg(V- \C\sum_{n=1}^N  |\bm{w}(t)\bm{h}_n^T(t)|^2 Z_n(t)\bigg)  \bigg \vert \bm{U}(t), \bm{B}(t) }+ \sum_{n,s}\lambda_{n,s}\left(U_{n,s}(t)+Z_{n}(t)\right),
\end{split}\label{eq:FtdefR}
\end{align} \end{small}
and 
\vspace{-0.35cm}
\begin{align}
\begin{split}
W_{l,s}(t) = &  Z_{\Tnode}  -Z_{\Rnode} +U_{\Tnode,s}(t)  -   U_{\Rnode,s}(t) .
\end{split}
\label{eq:dataCoeff}
\end{align}
The equality in \eqref{eq:defFtR} can be verified by adding and subtracting $\eexp{\C\Eh  | \bm{U}(t), \bm{B}(t)}Z_n(t)$ to \eqref{eq:defFt} and using the definitions for $\ui$, $\uo$, $\eo$ and $\Eh$.
The \alg{} is designed to approximately minimize  the right hand side of \eqref{eq:dppUpperBound}.  In this way, the control policy under the \alg{} follows the following procedure:

\textbf{Initialization:} Set  $\Ut \triangleq \max\{\eimax( \C +\alpha\delta),\uimax\}$ dummy data units in data queues, i.e., $U_{n,s}(0) = \Ut$ and assume $E_n(0) = 0, \forall n$. 

\textbf{Data Routing in time-slot $t$:}
Calculate the weights $W_{l,s}(t), \; \forall l,s$ according to  \eqref{eq:dataCoeff}.
Let
\begin{align}\label{eq:slDef}
s_l(t) \triangleq \argmax_{s}\left\{W_{l,s}(t)\right\}, \;\forall l,
\end{align}
\vspace{-0.35cm}
and 
\vspace{-0.35cm}
\begin{align}\label{eq:WlDef}
W_l(t) = \max\left\{\max_s \{W_{l,s}(t)\},0\right\},\;\forall l.
\end{align}
 The total capacity of link $l$ is assigned to  stream $s_l(t)$, i.e., 
\begin{align}\label{eq:routing}
\begin{cases}
R_{l,s}(t) = R_l(\bm{p}(t),\bm{g}(t))  & s =  s_l(t),\\
 R_{l,s}(t) = 0& \;\forall s\neq s_l(t).
\end{cases}
\end{align}

\textbf{Data link scheduling in time-slot $t$:} {The transmission power vector $\bm{p}(t)$ is determined by solving}
\begin{align}
 \bm{p}(t) = \argmin_{\tilde{\bm{p}}(t)\in\Pi}{   \sum_{l=1}^L  \bigg[\C Z_{\Tnode}(t) \tilde{ p}_l(t)  -  W_l(t)R_l(\tilde{\bm{p}}(t),\bm{g}(t)) \bigg]}.
\label{eq:MW}
\end{align}

\textbf{Energy link scheduling in time-slot $t$:}
The energy beamforming vector is determined as 
\vspace{-0.15cm}
\begin{align}\label{eq:beamForming}
\bm{w}(t) = \bm{v}^\ast_{max}(t),
\end{align}
\vspace{-0.15cm}
 where  $\bm{v}_{max}(t)$ is the  principal eigenvector of $\bm{H}(t)$ defined as
\begin{align}
\bm{H}(t) \triangleq  \C\sum_{n = 1}^N Z_n(t) \bm{h}_{n}^T(t)\bm{h}^\ast_{n}(t).
\label{eq:sumChan}
\end{align}
The transmission power of the E-AP is determined by
\begin{align}\label{eq:optimalPap}
P_{\text{AP}}(t) = 
\begin{cases}
\Pam, & V<  \C\sum_{n = 1}^N    |\bm{v}^\ast_{max}(t)\bm{h}^T_{n}(t)|^2Z_n(t),  \\
0,& \text{otherwise.}
\end{cases}
\end{align}

\textbf{Data/Energy time sharing in time-slot $t$:}
{Let} 
\begin{align}F_d^\star(t) \triangleq \C\sum_{l=1}^L Z_{\Tnode}(t) p_l(t)-\sum_{l=1}^L  W_{l}(t)R_{l}(\bm{p}(t) ,\bm{g}(t)),\vspace{-0.35cm}\end{align}
 and 
\vspace{-0.35cm}
\begin{align}F_e^\star(t) \triangleq P_{\text{AP}}(t) \bigg(V- \C\sum_{n=1}^N  |\bm{w}(t)\bm{h}_n^T(t)|^2 Z_n(t)\bigg),\end{align}
where $\bm{p}(t)$, $\bm{w}(t)$ and $P_\text{AP}(t)$  are determined in \eqref{eq:MW}, \eqref{eq:beamForming} and \eqref{eq:optimalPap}, respectively. The time sharing rule is  \vspace{-0.25cm}
\begin{align}\label{eq:timeSharing}
\begin{cases}
\Te = \Tf, \Td = 0 & F_e^\star(t) \leq F_d^\star(t),\\
\Te =  0, \Td = \Tf & F_e^\star(t) > F_d^\star(t).
\end{cases}
\end{align}

\textbf{Queues and batteries update in time-slot $t$:}
The portion of the received energy that is stored in  the battery is determined by 
\vspace{-0.15cm}
\begin{align}\label{eq:inEnergy}
\ei = \min \{\Eh, (Z_n(t)- \uimax) /\C\}.
\end{align}
The data queues and batteries are then updated according to \eqref{eq:queueEvolve} and \eqref{eq:batteryEvolve}, respectively.

The \alg{} policy for controlling the data link and the energy link   are summarized in Algorithms \ref{alg:DataLink} and  \ref{alg:EnergyLink}, respectively.

\begin{algorithm}[t]
\caption{\alg{}: Data routing and power scheduling  in time-slot $t$.}\label{alg:DataLink}
\begin{algorithmic}[1]

\State Calculate $W_{l,s}(t),\;\forall l,s,$  according to \eqref{eq:dataCoeff}.
\State  $ s _l(t)\leftarrow \argmax_{\mathrm{s}} {W_{l,\mathrm{s}}(t)}$, $W_l(t) \leftarrow\max\left \{ \max_{\mathrm{s}}\{W_{l,\mathrm{s}}(t)\},0\right\}$.
\State $ R_{l,{s_l(t)}}(t) \leftarrow R_l(\bm{p}(t),\bm{g}(t))$ and  $R_{l,s} = 0\; \forall l, s \neq{s_l(t)} $. \Comment{Data Routing}
\State  
$\bm{p}(t) \leftarrow \argmin_{\tilde{\bm{p}}\in\Pi}   \sum_{l=1}^L  \big[ J_{\Tnode}(t)  \tilde{p}_l -  W_l(t)R_l(\tilde{\bm{p}},\bm{g}(t)) \big].$ \Comment{Data link power scheduling}

\State $F^\star_d (t) \leftarrow  \sum_{l=1}^L  \big[ J_{\Tnode}(t) p_l(t)  -  W_l(t)R_l(\bm{p}(t),\bm{g}(t))\big]$.
\end{algorithmic}
\end{algorithm}

\begin{algorithm}[t]
\caption{\alg{}: Beamforming and energy transmission scheduling  in  time-slot $t$.}\label{alg:EnergyLink}
\begin{algorithmic}[1]
\State Calculate    $\bm{H}(t)$  according to   \eqref{eq:sumChan}.
\State $\bm{v}(t)\leftarrow$ the  principal eigenvectors of $\bm{H}(t)$  and  $\bm{w}(t) \leftarrow \bm{v}^\ast(t)$. \Comment{Beamforming}
\If{$V <  \C\sum_{n = 1}^N    |\bm{w}(t)\bm{h}^T_{n}(t)|^2Z_n(t)$}\Comment{Energy link power scheduling}
\State  $P_{\text{AP}}(t) \leftarrow \Pam$. 
\Else
\State  $P_{\text{AP}}(t) \leftarrow 0$.
\EndIf
\State $F_e^\star(t) \leftarrow VP_{\text{AP}}(t) - \C\sum_{n=1}^N \ei Z_n(t)$.
\end{algorithmic}
\end{algorithm}

\subsection{Discussion on the Proposed Control Policy}
Considering  \eqref{eq:dataCoeff}, the value of $W_{l,s}(t)$ increases if the  data queue in  node $N_t(l)$ is less congested and/or  if there is less data/energy imbalance in   node $N_t(l)$. Hence, according to the routing policy in  \eqref{eq:routing},  we expect that  with  \alg{} the data  will flow towards the nodes with less congested queues and  less data/energy imbalance.  The power allocation policy in  \eqref{eq:MW} devises a compromise between the energy consumption penalty represented by  $\C Z_{\Tnode}(t) p_l(t),\; \forall l$, and the data transmission reward represented by $W_l(t)R_l(\tilde{\bm{p}}(t),\bm{g}(t)),\;\forall l$.  The beamforming policy implies that the energy beam is focused towards the nodes with higher data/energy imbalance and higher energy-link channel gains. Moreover, considering the E-AP transmission power scheduling in  \eqref{eq:optimalPap}, the control  parameter $V$ can be described as  the energy conservativeness indicator of the \alg{}, since  by increasing   $V$ the E-AP transmits  less often. 

The time sharing control parameters, $F_d^\star(t)$ and $ F_e^\star(t)$, can be described as two metrics  representing the gain for data and energy transmission in time-slot $t$, respectively. These two parameters take into account  the level of the stored energy and data in the nodes as well as the CSI  to determine the energy/data transmission gain. 
Finally, the policy for storing the received  energy   in \eqref{eq:inEnergy} implies that with a small value of $Z_n(t)$  a portion of  the received energy may not be stored in the battery, which prevents the battery from being overcharged.   Note that this event mostly occurs when the energy-link channels are not orthogonal, i.e., {$\bm{h}_n(t) \bm{h}_m^H(t) \neq 0, \forall m\neq n$}. Otherwise, according to \eqref{eq:sumChan} and for a small value of  $Z_n(t)$, the beamforming vector will be almost  orthogonal to $\bm{h}_n(t)$. Hence, the received energy in node $n$ will be negligible. 
\vspace{-0.15cm}

\vspace{-0.20cm}

\subsection{Performance Analysis of the Proposed Control Policy}\label{sec:performance}
 In this section, we evaluate the performance of the proposed policy.  In this regard,  Lemma \ref{lem:AlgProp}  introduces some properties that are satisfied with \alg{} in each time-slot. Particularly, we show in Lemma \ref{lem:AlgProp}  that the battery constraint \eqref{eq:batteryConstraint} is satisfied. In Lemma \ref{lem:driftB}, we use the properties of Lemma \ref{lem:AlgProp} to show that \alg{} approximately minimizes the right hand side of \eqref{eq:dppUpperBound}. Finally,  we use the result in Lemma \ref{lem:driftB} to evaluate the energy consumption  with \alg{} and to show that the backlog in the queues satisfy \eqref{eq:stableConstraint}. {First, we present Lemma \ref{lem:AlgProp}.}
\begin{lemma}\label{lem:AlgProp}
With the \alg{}, in each time-slot $t$, we have that
\begin{enumerate}
\item  The imbalance indicator $Z_{n}(t), \; \forall n$, satisfies $Z_{n}(t) \geq \uimax$.
\item  The assigned  rate $R_{l,s}(t)$ is nonzero only if $U_{\Tnode,s}(t) \geq U_0+\uimax, \;\forall l,s$.
\item  The  drained energy $\eo$ is nonzero only if $B_n(t) \geq \eimax, \; \forall n$.
\end{enumerate}
\begin{proof}
See Appendix \ref{sec:ProofAlgProp}.
\end{proof}
\end{lemma}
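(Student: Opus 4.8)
The plan is to prove items~1--3 simultaneously by induction on $t$, after strengthening the statement with the auxiliary invariant that the dummy packets placed in every queue at $t=0$ are never fully drained, i.e.\ $U_{n,s}(t)\ge\Ut$ for all $n,s$. So the induction hypothesis at time $t$ reads: (I1)~$Z_n(t)\ge\uimax$ for every $n$ (this is item~1), and (I2)~$U_{n,s}(t)\ge\Ut$ for every $n,s$. Both hold at $t=0$ by the initialization: $U_{n,s}(0)=\Ut$ gives (I2), and $B_n(0)=0$ gives $Z_n(0)=S\Ut\ge\uimax$ via \eqref{eq:Zdef}. Assuming (I1)--(I2) at $t$, I would first derive items~2 and~3 for slot~$t$ and then propagate (I1)--(I2) to $t+1$.

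Two facts are used throughout. The algebraic identities forced by $\C=\frac{2\delta}{1-1/\alpha}$ and $\Ut=\max\{\eimax(\C+\alpha\delta),\uimax\}$: namely $\C/\delta>1$, $\C\ge2\delta$, $\frac{\delta}{\C-\delta}=\frac{\alpha-1}{\alpha+1}<1$, and $\Ut\ge(\C+\alpha\delta)\eimax\ge\C\eimax$; the constants are chosen precisely so that the bounds below close. And the rate estimate $R_l(\bm p(t),\bm g(t))\le\delta\,p_l(t)$ for every link $l$, obtained by zeroing the $l$-th entry of $\bm p(t)$ (still feasible by the closure assumption on $\Pi$), killing that link's rate via \eqref{eq:noPowerNoRate}, and invoking \eqref{eq:rateBound}. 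The engine of the inductive step is the following power-optimality estimate: if $p_l(t)>0$, then comparing the minimizer $\bm p(t)$ of \eqref{eq:MW} against the same vector with the $l$-th entry set to $0$, and using \eqref{eq:interference} (removing link $l$'s power cannot lower the other links' rewards) together with $W_{l'}(t)\ge0$, gives $\C Z_{\Tnode}(t)\,p_l(t)\le W_l(t)\,R_l(\bm p(t),\bm g(t))\le\delta\,W_l(t)\,p_l(t)$, hence $\C Z_{\Tnode}(t)\le\delta\,W_l(t)$ (and $R_l(\bm p(t),\bm g(t))>0$ whenever $Z_{\Tnode}(t)>0$).

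For item~2: if $R_{l,s}(t)\ne0$ then by \eqref{eq:routing} $s=s_l(t)$ and $p_l(t)>0$, so $W_l(t)=W_{l,s}(t)$ (otherwise $W_l(t)=0$, which forces $Z_{\Tnode}(t)=0$, hence $\uimax=0$ by (I1), a case in which item~2 is immediate from (I2)); substituting the definition \eqref{eq:dataCoeff} of $W_{l,s}(t)$ into $W_{l,s}(t)\ge\frac{\C}{\delta}Z_{\Tnode}(t)$ and solving for the head queue gives
\[
U_{\Tnode,s}(t)\ \ge\ \Big(\frac{\C}{\delta}-1\Big)Z_{\Tnode}(t)+Z_{\Rnode}(t)+U_{\Rnode,s}(t)\ \ge\ \frac{\C}{\delta}\uimax+\Ut\ \ge\ \Ut+\uimax,
\]
where I use (I1) on $Z_{\Tnode}(t),Z_{\Rnode}(t)$, (I2) on $U_{\Rnode,s}(t)$, and $\C/\delta>1$. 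For item~3: if $\eo\ne0$ then $p_l(t)>0$ for some $l\in\Out_n$, so $R_l>0$, hence $R_{l,s_l(t)}(t)\ne0$ and item~2 gives $U_{n,s_l(t)}(t)\ge\Ut+\uimax$; feeding (I1)--(I2) into $\C Z_n(t)\le\delta\,W_{l,s_l(t)}(t)$ also gives $Z_n(t)\le\frac{\delta}{\C-\delta}\big(U_{n,s_l(t)}(t)-\uimax-\Ut\big)\le U_{n,s_l(t)}(t)-\uimax-\Ut$, so by \eqref{eq:Zdef} $\C B_n(t)=\sum_s U_{n,s}(t)-Z_n(t)\ge U_{n,s_l(t)}(t)-Z_n(t)\ge\uimax+\Ut\ge\C\eimax$, i.e.\ $B_n(t)\ge\eimax$. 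To propagate (I1) to $t+1$: in an energy slot $\uo=\eo=0$ and \eqref{eq:inEnergy} gives $\C\ei\le Z_n(t)-\uimax$, so by \eqref{eq:queueEvolve}--\eqref{eq:batteryEvolve} $Z_n(t+1)=Z_n(t)+\sum_sA_{n,s}(t)-\C\ei\ge\uimax$; in a data slot $\Te=0$ forces $\Eh=0$ hence $\ei=0$, and by \eqref{eq:routing} and $R_l\le\delta p_l$ one has $\sum_s\uo\le\delta\eo\le\C\eo$, so $Z_n(t+1)\ge Z_n(t)+\sum_s(\ui-\uo)+\C\eo\ge Z_n(t)\ge\uimax$. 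To propagate (I2): if $\uo=0$ then $U_{n,s}(t+1)\ge U_{n,s}(t)\ge\Ut$; otherwise some $R_{l,s}(t)\ne0$ with $\Tnode=n$, so item~2 for slot~$t$ gives $U_{n,s}(t)\ge\Ut+\uimax$, and since $\uo\le\uimax$, $U_{n,s}(t+1)\ge U_{n,s}(t)-\uo\ge\Ut$. This closes the induction.

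The hard part is recognizing that the strengthening (I2) is needed at all: the power-optimality estimate on its own bounds the head queue only by $\sim\frac{\C}{\delta}\uimax$, which falls short of $\Ut+\uimax$ precisely in the regime where $\Ut=\eimax(\C+\alpha\delta)$ dominates $\uimax$, so one must exploit that every queue permanently retains the dummy floor $\Ut$ and thread (I2) through a joint induction. One must also respect the order of deductions inside the step — item~2 for slot~$t$ is used both in the proof of item~3 and in propagating (I2) — and otherwise the argument is routine constant-chasing, the constants having been defined so that each inequality is tight.
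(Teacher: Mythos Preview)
Your proof is correct and rests on the same two pillars as the paper's: the auxiliary invariant $U_{n,s}(t)\ge\Ut$ and the optimality comparison obtained by zeroing one entry of $\bm p(t)$. Where you differ is in packaging. The paper proves item~1 by its own induction, then item~2 by contradiction (assuming $U_{\Tnode,s}\le\Ut+\uimax$ and showing $W_l\le Z_{\Tnode}$, whence $\bar{\bm p}$ is no worse), and item~3 again by contradiction, assuming $B_n\le\eimax$ and pushing through a somewhat delicate monotonicity argument for the fraction $x/(x-y)$ together with the exact value $\C=2\delta/(1-1/\alpha)$. You instead run a single joint induction on (I1)--(I2), extract the contrapositive inequality $\C Z_{\Tnode}\le\delta W_l$ once, and use it \emph{forwards}: for item~2 you solve directly for a lower bound on $U_{\Tnode,s}$; for item~3 you reuse item~2 and the same inequality to bound $Z_n$ from above by $U_{n,s_l}-\uimax-\Ut$, which immediately yields $\C B_n\ge\Ut+\uimax\ge\C\eimax$. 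Your route is more streamlined (no contradiction scaffolding, no fraction monotonicity), while the paper's separate contradictions make clearer which specific threshold fails when which constant is violated. Both arguments share the same minor looseness about tie-breaking in the $\argmin$ of \eqref{eq:MW} when the comparison yields equality rather than strict improvement.
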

 The first part of Lemma  \ref{lem:AlgProp} ensures that the stored amounts of energy in the  batteries are bounded in proportion to the stored backlog  in the data queues of the nodes, i.e., $\C B_n(t) \leq \sum_s U_{n,s}(t)-\uimax$. The second part in Lemma \ref{lem:AlgProp} implies that there will be enough data  for transmission, when the outgoing rate from a node is nonzero. Hence, with  \alg{}, we have $e_{n,s}(t) = 0,\; \forall n,s,t$. Moreover, the third part in Lemma \ref{lem:AlgProp} guarantees that when a node transmits data, i.e., $\eo >0$, we have $\eo \leq \eimax \leq B_n(t)$. Hence,   \alg{} conforms to the battery constraint \eqref{eq:batteryConstraint}.

Using the properties in Lemma \ref{lem:AlgProp}, it can be shown that the \alg{}  approximately minimizes the right hand side in \eqref{eq:dppUpperBound}. Specifically, with $e_{n,s}(t)=0$, it suffices to show that \alg{} approximately minimizes $F(t)$ in \eqref{eq:defFt}. For this reason,  let $F^{\text{min}}(t)$  denote the minimum value of $F(t)$  in time-slot $t$ over every alternative  policy, including the policies that violate the battery constraint \eqref{eq:batteryConstraint}, i.e., \vspace{-0.2cm}
\begin{mini}
{\substack{\bm{w}(t),P_{\text{AP}}(t),\bm{p}(t), R_{l,s}(t), \Te,\Td }} {F(t)}{\label{prob:F}}{F^{\text{min}}(t)=}
\addConstraint{\eqref{eq:dataRelatedConstraint},\eqref{eq:energyRelatedConstraint},\eqref{eq:timingConstraint}.}
\end{mini}
  Lemma \ref{lem:driftB} presents the gap between $F(t)$ under \alg{} and  $F^{\text{min}}(t)$.  
\begin{lemma}\label{lem:driftB}
Under  \alg{}, in each time-slot $t$, we have
\begin{align}\label{eq:tPerformance}
F(t) \leq F^{\text{min}}(t) + \mathcal{B}_1,
\end{align}
\vspace{-0.25cm}
where $\mathcal{B}_1 \triangleq \C\eimax(   \C\eimax + \uimax)$.
\begin{proof}
See Appendix \ref{app:proofLem:driftB}.
\vspace{-0.25cm}
\end{proof}
\end{lemma}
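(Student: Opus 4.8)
The plan is to build on the decomposition \eqref{eq:defFtR}, i.e.\ $F(t)=\tilde F(t)+\sum_{n}\eexp{\C(\Eh-\ei)\,|\,\bm U(t),\bm B(t)}Z_n(t)$, and to establish two facts: (i) \alg{} minimizes $\tilde F(t)$ \emph{exactly} over all policies obeying \eqref{eq:dataRelatedConstraint}, \eqref{eq:energyRelatedConstraint} and \eqref{eq:timingConstraint}; and (ii) the residual ``energy over-reception'' term $\sum_n \eexp{\C(\Eh-\ei)\,|\,\bm U(t),\bm B(t)}Z_n(t)$ is nonnegative under every feasible policy and is at most $\mathcal B_1$ under \alg{}. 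Given (i)--(ii): since $\ei$ enters $F(t)$ only through that residual and $Z_n(t)\ge 0$, the policy attaining $F^{\text{min}}(t)$ in \eqref{prob:F} loses nothing by retaining all received energy ($\ei=\Eh$), so $F^{\text{min}}(t)=\min_{\text{feasible}}\tilde F(t)=\tilde F(t)\big|_{\alg{}}$; therefore $F(t)\big|_{\alg{}}-F^{\text{min}}(t)$ equals the residual evaluated under \alg{}, which by (ii) is at most $\mathcal B_1$ --- exactly \eqref{eq:tPerformance}.

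To prove (i) I would go term by term through \eqref{eq:FtdefR}. The term $\sum_{n,s}\lambda_{n,s}(U_{n,s}(t)+Z_n(t))$ is control-independent, so it remains to minimize $\Td\big(\C\sum_l Z_{\Tnode}(t)p_l-\sum_{l,s}W_{l,s}(t)R_{l,s}\big)$ and $\Te P_{\text{AP}}(t)\big(V-\C\sum_n|\bm w(t)\bm h_n^T(t)|^2Z_n(t)\big)$ over the rest. For the data variables, any feasible $(\bm p,\{R_{l,s}\})$ satisfies $\sum_sW_{l,s}(t)R_{l,s}\le W_l(t)\sum_sR_{l,s}\le W_l(t)R_l(\bm p,\bm g(t))$ since $W_l(t)=\big[\max_sW_{l,s}(t)\big]^+\ge 0$; hence the data bracket is lower bounded by $\C\sum_l Z_{\Tnode}(t)p_l-\sum_l W_l(t)R_l(\bm p,\bm g(t))\ge\min_{\bm p'\in\Pi}\big(\C\sum_l Z_{\Tnode}(t)p_l'-\sum_l W_l(t)R_l(\bm p',\bm g(t))\big)=F_d^\star(t)$, which \alg{} attains via \eqref{eq:routing}--\eqref{eq:MW}. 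For the energy variables, $Z_n(t)\ge\uimax>0$ by Lemma~\ref{lem:AlgProp}(1), so $\bm H(t)$ of \eqref{eq:sumChan} is positive semidefinite and $\C\sum_n|\bm w(t)\bm h_n^T(t)|^2Z_n(t)=\bm w(t)\bm H(t)\bm w^H(t)$ is maximized over $\lVert\bm w(t)\rVert=1$ by the principal eigenvector \eqref{eq:beamForming}; for that $\bm w(t)$ the map $P_{\text{AP}}(t)\mapsto P_{\text{AP}}(t)\big(V-\C\sum_n|\bm w(t)\bm h_n^T(t)|^2Z_n(t)\big)$ is affine on $[0,\Pam]$ and minimized by the threshold \eqref{eq:optimalPap}, giving $F_e^\star(t)$. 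Plugging these in, every feasible policy has $\tilde F(t)\ge\Td F_d^\star(t)+\Te F_e^\star(t)+\sum_{n,s}\lambda_{n,s}(U_{n,s}(t)+Z_n(t))$, and over the segment $\Td+\Te=\Tf,\ \Td,\Te\ge 0$ this affine function is minimized at a vertex --- precisely the time-sharing rule \eqref{eq:timeSharing}. So \alg{} attains $\min_{\text{feasible}}\tilde F(t)$.

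For (ii) I would first note that $\ei\le\Eh$ (a modeling constraint) and $Z_n(t)\ge\uimax>0$ (Lemma~\ref{lem:AlgProp}(1), which applies since the slot-$t$ state is generated by \alg{}), so each summand $\eexp{\C(\Eh-\ei)\,|\,\cdot}Z_n(t)$ is $\ge 0$ for every policy --- this is what legitimizes the reduction $F^{\text{min}}(t)=\min\tilde F(t)$ used above. Under \alg{}, the storage rule \eqref{eq:inEnergy} makes the $n$-th summand equal to $Z_n(t)\,\big[\C\Eh-Z_n(t)+\uimax\big]^+$, which is zero unless $\ei=(Z_n(t)-\uimax)/\C<\Eh$, i.e.\ unless $\uimax\le Z_n(t)<\C\Eh+\uimax\le\C\eimax+\uimax$ (using $\Eh\le\eimax$); and in that case, since $\ei\ge 0$ and $Z_n(t)\ge\uimax$, one gets $\C(\Eh-\ei)Z_n(t)\le\C\Eh Z_n(t)\le\C\eimax(\C\eimax+\uimax)=\mathcal B_1$. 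This per-node estimate is the crux, and the bound $\mathcal B_1$ on the residual follows from it.

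I expect step (ii) to be the main obstacle, because the residual is exactly where \alg{} differs from a pure $F$-minimizer: \alg{} intentionally discards part of the received energy through \eqref{eq:inEnergy} to keep the batteries from overcharging, something the cost-optimal policy in \eqref{prob:F} would never do. The estimate closes only because Lemma~\ref{lem:AlgProp}(1) guarantees $Z_n(t)\ge\uimax$, so that a node's wasted received energy can couple only to an imbalance indicator that is small (close to $\uimax$) precisely when over-reception occurs, keeping the product bounded by the constant $\mathcal B_1$. Step (i), by contrast, is a routine separable-minimization argument once $W_l(t)\ge 0$ and $Z_n(t)>0$ are in hand.
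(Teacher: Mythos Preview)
Your proof is essentially the paper's own argument: establish the per-node bound $\C(\Eh-\ei)Z_n(t)\le\mathcal B_1$ on the over-reception term via the case split on $Z_n(t)$, and then verify that \alg{} minimizes $\tilde F(t)$ by separately optimizing the data bracket (routing \eqref{eq:routing} plus scheduling \eqref{eq:MW}), the energy bracket (eigenvector rule \eqref{eq:beamForming} and threshold \eqref{eq:optimalPap}), and the affine time-sharing \eqref{eq:timeSharing}. Your explicit remark that the residual is nonnegative for every feasible action --- so that $F^{\text{min}}(t)=\min\tilde F(t)$ once $\ei=\Eh$ is allowed --- is a point the paper leaves implicit, but otherwise the route is identical.
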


Using the bound in \eqref{eq:tPerformance}, and following the Lyapunov optimization Theorem \cite[Theorem 4.2]{Neely2010},  we compare the  energy consumption under  \alg{}  with  $E^{\text{opt}}(\bm{\lambda})$ and bound the time-averaged expected backlog in the queues. Specifically, let $\Lambda$ denote  the set of  data arrival rates that are  inside the capacity region of  the network. Hence, Problem  \eqref{prob:mainProbDef} is feasible  if and only if  $\bm{\lambda} \in \Lambda$. Theorem~\ref{th:performanceTh} characterizes the performance of the \alg{} when $\bm{\lambda}$ is strictly inside  $\Lambda$. Particularly, parts  1 and 2 of Theorem \ref{th:performanceTh} show the optimality of the energy consumption  and the stability of the network under \alg{}, respectively.

\begin{theorem}\label{th:performanceTh}
Suppose that the arrival rates are strictly inside the capacity region, i.e., there is a scalar $\epsilon_{max}$ such that $\forall \epsilon \in (0,\epsilon_{max}]: \bm{\lambda} + \bm{\epsilon} \in \Lambda$, where $\bm{\epsilon}$ is a vector with all entries equal to $\epsilon$. With our proposed \alg{},
\begin{enumerate}
\item The time-averaged expected energy consumption  satisfies
\begin{align}
\limsup_{T\rightarrow \infty}{1\over T} \sum_{t = 0}^{T-1}\mathbb{E}\{ E_{\text{AP}}(t)\} &\leq E^\text{opt}(\bm{\lambda}) + {\mathcal{B}_2 \over V}.
\label{eq:optimalityGap}
\end{align}
\item The queues are stable and the time-averaged expected sum backlog satisfies
\begin{align}
\limsup_{T\rightarrow \infty}{1\over T} \sum_{t = 0}^{T-1} \sum_{n,s}\mathbb{E}\{U_{n,s}(t)\} & \leq {V E^\text{opt}(\bm{\lambda}+\bm{\epsilon}_{max}) + \mathcal{B}_2 \over \epsilon_{max}},
\label{eq:stability}
\end{align}
\end{enumerate}
where $\mathcal{B}_2 \triangleq  \mathcal{B}_0+ \mathcal{B}_1$.
\end{theorem}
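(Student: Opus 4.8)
The plan is to carry out the standard min-drift-plus-penalty argument, feeding in Lemmas~\ref{lem:upperBound}, \ref{lem:AlgProp} and \ref{lem:driftB} and comparing \alg{} against a stationary channel-state-only (``$\omega$-only'') randomized policy. The first step is to chain the three lemmas into a one-slot drift bound. By part~2 of Lemma~\ref{lem:AlgProp}, under \alg{} a link carries a nonzero rate only if its transmitting queue holds at least $U_0+\uimax$ units, so whenever $\uo>0$ we have $U_{n,s}(t)\ge\uimax\ge\uo$ (using the standing bound $\uo\le\uimax$), hence $e_{n,s}(t)=[\uo-U_{n,s}(t)]^+=0$, and the last term of \eqref{eq:dppUpperBound} disappears. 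Lemma~\ref{lem:upperBound} then reduces to $\Delta_p(L(t),V)\le\mathcal{B}_0+F(t)$, and combining with Lemma~\ref{lem:driftB} gives, in every slot, $\Delta_p(L(t),V)\le\mathcal{B}_0+\mathcal{B}_1+F^{\text{min}}(t)=\mathcal{B}_2+F^{\text{min}}(t)$.

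The second step is to bound $F^{\text{min}}(t)$ by evaluating $F(t)$ at a convenient feasible point of Problem~\eqref{prob:F}. I would invoke the standard capacity-region characterization \cite[Ch.~4]{Neely2010}: for any rate vector $\bm r$ strictly inside $\Lambda$ there is an $\omega$-only policy $\Pi^\star$ --- its decisions depending only on $\bm g(t)$ and the $\bm h_n(t)$, hence independent of $(\bm U(t),\bm B(t))$ --- such that, with all expectations taken under $\Pi^\star$, (i) $\eexp{E_{\text{AP}}(t)}\le E^{\text{opt}}(\bm r)+\gamma$ for arbitrary $\gamma>0$, (ii) $\eexp{\ui-\uo}\le 0$ at every queue, sharpened to $\le-\epsilon$ when $\bm r=\bm\lambda+\bm\epsilon$, and (iii) $\eexp{\ei-\eo}\ge 0$ at every node --- the last because a battery-feasible policy cannot, on average, drain more energy than it harvests, so its time-average drain is at most its time-average harvest, and one may in any case set $\ei=\Eh$ in the comparison. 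Since $\Pi^\star$ is $\omega$-only, every conditional expectation in \eqref{eq:defFt} collapses to one of these constants; and since part~1 of Lemma~\ref{lem:AlgProp} gives $Z_n(t)\ge\uimax>0$, together with $\C>0$ the coefficient of each $Z_n(t)$ in $F^{\Pi^\star}(t)$ is $\le 0$ while the coefficient of each $U_{n,s}(t)$ is $\le 0$. Letting $\gamma\downarrow 0$ this yields $F^{\text{min}}(t)\le V E^{\text{opt}}(\bm\lambda)$ when $\bm r=\bm\lambda$, and $F^{\text{min}}(t)\le V E^{\text{opt}}(\bm\lambda+\bm\epsilon_{max})-\epsilon_{max}\sum_{n,s}U_{n,s}(t)$ when $\bm r=\bm\lambda+\bm\epsilon_{max}$.

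The third step is the usual telescoping of the drift. For part~1 the bound reads $\eexp{L(t+1)-L(t)}+V\,\eexp{E_{\text{AP}}(t)}\le\mathcal{B}_2+V E^{\text{opt}}(\bm\lambda)$; taking total expectations, summing over $t=0,\dots,T-1$, using $L(T)\ge 0$ and $L(0)<\infty$ (immediate from the initialization $U_{n,s}(0)=U_0$, $B_n(0)=0$), dividing by $VT$ and letting $T\to\infty$ gives \eqref{eq:optimalityGap}. For part~2 I would discard the nonnegative term $V\,\eexp{E_{\text{AP}}(t)}$ to obtain $\eexp{L(t+1)-L(t)}\le\mathcal{B}_2+V E^{\text{opt}}(\bm\lambda+\bm\epsilon_{max})-\epsilon_{max}\sum_{n,s}\eexp{U_{n,s}(t)}$, telescope in the same way, and solve for the time average of $\sum_{n,s}\eexp{U_{n,s}(t)}$ to get \eqref{eq:stability}; finiteness of that $\limsup$ is precisely \eqref{eq:stableConstraint}, so the queues are stable.

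The step I expect to be the real obstacle is the second one: producing an $\omega$-only policy that is simultaneously near-energy-optimal, enjoys a uniform $\epsilon_{max}$ slack at \emph{every} data queue (not merely at the source queues, which is all a naive path-flow decomposition delivers), and never depletes the batteries on average, and then carefully tracking the signs of the imbalance terms it injects into $F(t)$ --- where $Z_n(t)\ge\uimax$ from Lemma~\ref{lem:AlgProp} is exactly what forces the $Z_n$-weighted term to be non-positive. Everything after that is routine bookkeeping.
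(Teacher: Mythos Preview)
Your proposal is correct and follows essentially the same route as the paper: chain Lemmas~\ref{lem:upperBound}, \ref{lem:AlgProp}, \ref{lem:driftB} into $\Delta_p(L(t),V)\le\mathcal{B}_2+F^{\text{min}}(t)$, upper-bound $F^{\text{min}}(t)$ via a stationary $\omega$-only policy drawn from \cite[Theorem~4.5]{Neely2010} (using $Z_n(t)>0$ from Lemma~\ref{lem:AlgProp} to kill the imbalance terms), and telescope. The only cosmetic differences are that the paper introduces an explicit relaxed problem with the time-average battery constraint \eqref{eq:avgBatteryConstraint} to justify property~(iii) and the inequality $\bar E^{\text{opt}}\le E^{\text{opt}}$, and it handles part~1 by carrying a single $\epsilon\in(0,\epsilon_{\max}]$ throughout and sending $\epsilon\to 0$ at the end rather than plugging in $\bm r=\bm\lambda$ directly.
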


\begin{proof}
See Appendix \ref{sec:PrrofThPerf}.
\end{proof}
 The performance bounds in $\eqref{eq:optimalityGap}$ and $\eqref{eq:stability}$ introduce a trade-off between the optimality gap and the average queue backlog that is controlled by $V$.  According to this trade-off, when the average energy consumption is within  $\mathcal{O}({1\over V})$ of the minimum  energy, the  average backlog  could be upper bounded by a term of the order of $\mathcal{O}(V)$.

\color{black}

\section{{Time Evolution of Data Backlog and Battery Level}}\label{sec:behavior}
Theorem \ref{th:performanceTh} bounds the  average backlog in the queues. However, it does not discuss the  behavior  of the backlogs and the battery levels, which are of  importance for the implementation of the policy. In this section, we study the time evolution of the data backlog  and the battery level  under  \alg{}  using the  backlog attraction result in \cite{HuangDelay}. We show that with  \alg{} the data backlog  and the battery level converge to a transformation of the dual optimal solution for the following deterministic problem 
\begin{mini!}
{\substack{\bm{p}(t), \bm{w}(t),C_{(l,s)}(t),P_{\text{AP}}(t), \Te, \Td   }} { V\eexp{E_{\text{AP}}(t)} }{\label{prob:stationaryProbDef}}{V E^{\star} (\bm{\lambda}) = }
\addConstraint{  \eexp{\ui} \leq \eexp{\uo}, \forall n,s} 
\addConstraint{  \eexp{\eo} \leq \eexp{\ei}, \forall n} 
\addConstraint{ \eqref{eq:dataRelatedConstraint}, \eqref{eq:energyRelatedConstraint},  \eqref{eq:timingConstraint}}.
\end{mini!}
The solution to \eqref{prob:stationaryProbDef} is a stationary policy that is only a function of the instantaneous CSI. Hence, we have omitted the time averages.  
Specifically, let $g\big( [\bm{\eta}, \bm{\beta}]\big)$  with $[\bm{\eta}, \bm{\beta}] =[\eta_{n,s} \geq 0 \;\forall (n,s), \beta_n\geq 0 \;\forall n]$ denote the dual function of Problem  \eqref{prob:stationaryProbDef}, that is, 
\begin{small}
\begin{align}\label{eq:gDef}
g\big( [\bm{\eta}, \bm{\beta}]\big) = \inf_{  \substack{ \bm{w}(t),P_{\text{AP}}(t), \bm{p}(t),\\C_{(l,s)}(t),   \Te, \Td    }}  \left[    \eexp{VE_{\text{AP}}(t)}  +\eexp{\sum_{n,s} \eta_{n,s}\left( \ui - \uo\right) + \sum_{n} \beta_n\left( \eo- \ei \right)}     \right],
\end{align}
\end{small}
and let $ [\bm{\eta}^\star, \bm{\beta}^\star] $ denote the optimal solution to the dual problem, i.e., 
\begin{align}
[\bm{\eta}^\star, \bm{\beta}^\star]  = \argmax{g\big( [\bm{\eta}, \bm{\beta}]\big)} \;\;\text{s.t.}\;\;  \bm{\eta}, \bm{\beta} \geq 0.
\end{align}
 Let $[\bm{\nu}^\star, \bm{\zeta}^\star] = [\nu_{n,s}^\star, \;\forall (n,s), \zeta_n^\star, \;\forall n]$ be constructed from $[\bm{\eta}^\star, \bm{\beta}^\star]$ as 
\begin{align}
\nu^\star_{n,s} &= \eta^\star_{n,s} - {\beta^\star_n \over \C},\\
\zeta^\star_n &= {\beta^\star_n\over\C}.
\end{align}
Moreover, let $\bm{ \varepsilon^\star} = [\varepsilon_n^\star, \; \forall n]$ be constructed from $[\bm{\nu}^\star, \bm{\zeta}^\star]$ as 
\begin{align}
\varepsilon_n^\star = {\sum_{s} \nu^\star_{n,s} - \zeta^\star_n \over \C  }.
\end{align}
 Theorem \ref{sec:attrTheorem} presents the main result on the behavior of queue backlogs and battery levels.

\begin{theorem}\label{sec:attrTheorem} Suppose that the dual function \eqref{eq:gDef} satisfies
\begin{align}\label{eq:plolyHedral}
g\big([\bm{\eta}^\star, \bm{\beta}^\star]\big)- g\big([\bm{\eta}, \bm{\beta}]\big)  \geq \mathcal{K} \big\lVert   [\bm{\eta}^\star, \bm{\beta}^\star] -    [\bm{\eta}, \bm{\beta}] \big\rVert,
\end{align}
 for some $\mathcal{K} > 0$. Then, with the \alg{}, there exists constants $D$, $c^\star$ and $\beta^\star$  independent of $V$ such that for every $m \geq 0$ we have
\vspace{-0.25cm}
\begin{align}
\limsup_{T\rightarrow \infty} {1\over T} \sum_{t=0}^{T-1} \Pr \{\exists (n,s): |U_{n,s}(t) - \nu_{n,s}^\star| > D +m \} &\leq {c^\star e^{-\beta^\star m}}\label{eq:ThAttrCu},\\
\limsup_{T \rightarrow \infty}{1\over T} \sum_{t=0}^{T-1}\Pr \left\{ \exists n: |B_n(t)  -  \varepsilon_n^\star |> ((S+1) D + m)/\C  \right\} &\leq {2c^\star e^{-\beta^\star {m\over S+1}}}.\label{eq:ThAttrCe}
\end{align}
\begin{proof}
See Appendix \ref{sec:appAtrractionPoint}.
\end{proof}
\end{theorem}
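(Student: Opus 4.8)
The plan is to recognize \alg{} as an (approximate) online dual-subgradient, i.e.\ min drift-plus-penalty, iteration attached to the deterministic problem \eqref{prob:stationaryProbDef}, to read off the point it is attracted to as $(\bm{\nu}^\star,\bm{\zeta}^\star)$, then to run a shifted-Lyapunov drift argument driven by the polyhedral hypothesis \eqref{eq:plolyHedral}, and finally to invoke the geometric-tail (backlog attraction) bound of \cite{HuangDelay}. The set-up is bookkeeping. Put $\bm{q}(t)\triangleq(\bm{U}(t),\bm{Z}(t))$ and let $\bm{D}(t)$ be its one-slot displacement, with components $\ui-\uo$ for the data-queue coordinates and $\sum_s(\ui-\uo)-\C(\ei-\eo)$ for the imbalance coordinates; this displacement is well defined and additive because, by the initialization together with Lemma \ref{lem:AlgProp}, $U_{n,s}(t)\ge\Ut$ so the projection in \eqref{eq:queueEvolve} is never triggered, and \eqref{eq:batteryEvolve} is already projection-free. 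Using $e_{n,s}(t)=0$ (from Lemma \ref{lem:AlgProp}), the quantity $F(t)$ in \eqref{eq:defFt} rearranges into $F(t)=V\mathbb{E}\{E_{\text{AP}}(t)\}+\langle\bm{q}(t),\mathbb{E}\{\bm{D}(t)\}\rangle$ (expectations conditioned on the current state), while $F^{\text{min}}(t)$ in \eqref{prob:F} is precisely the dual value $g\big([\bm{\eta}(t),\bm{\beta}(t)]\big)$ from \eqref{eq:gDef} at the \emph{effective multipliers} $[\bm{\eta}(t),\bm{\beta}(t)]\triangleq\big[\,U_{n,s}(t)+Z_n(t)\,,\,\C Z_n(t)\,\big]$. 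Hence, by Lemma \ref{lem:driftB}, \alg{} minimizes $\langle\bm{q}(t),\bm{D}(t)\rangle+VE_{\text{AP}}(t)$ up to the constant $\mathcal{B}_1$. Inverting the fixed linear bijection $[\bm{\eta},\bm{\beta}]\leftrightarrow[\,\bm{U}+\bm{Z}\,,\,\C\bm{Z}\,]$ applied to the dual optimizer $[\bm{\eta}^\star,\bm{\beta}^\star]$ gives $U_{n,s}=\eta^\star_{n,s}-\beta^\star_n/\C=\nu^\star_{n,s}$ and $Z_n=\beta^\star_n/\C=\zeta^\star_n$, so the candidate attractor is $\bm{q}^\star\triangleq(\bm{\nu}^\star,\bm{\zeta}^\star)$.

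Next I would study the drift of $\Psi(t)\triangleq\frac{1}{2}\lVert\bm{q}(t)-\bm{q}^\star\rVert^2$. Expanding, $\Psi(t+1)-\Psi(t)=\langle\bm{q}(t)-\bm{q}^\star,\bm{D}(t)\rangle+\frac{1}{2}\lVert\bm{D}(t)\rVert^2$, and the last term is bounded since $\lVert\bm{D}(t)\rVert$ is uniformly bounded by a $V$-free constant. For the inner product, Lemma \ref{lem:driftB} gives $\langle\bm{q}(t),\mathbb{E}\{\bm{D}(t)\}\rangle\le g\big([\bm{\eta}(t),\bm{\beta}(t)]\big)+\mathcal{B}_1-V\mathbb{E}\{E_{\text{AP}}(t)\}$, and weak duality (dual feasibility of $[\bm{\eta}^\star,\bm{\beta}^\star]$) gives $-\langle\bm{q}^\star,\mathbb{E}\{\bm{D}(t)\}\rangle\le V\mathbb{E}\{E_{\text{AP}}(t)\}-g\big([\bm{\eta}^\star,\bm{\beta}^\star]\big)$. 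Adding these, the $V\mathbb{E}\{E_{\text{AP}}(t)\}$ cancels and the remainder $g\big([\bm{\eta}(t),\bm{\beta}(t)]\big)-g\big([\bm{\eta}^\star,\bm{\beta}^\star]\big)+\mathcal{B}_1$ is, by \eqref{eq:plolyHedral}, at most $-\mathcal{K}\lVert[\bm{\eta}(t),\bm{\beta}(t)]-[\bm{\eta}^\star,\bm{\beta}^\star]\rVert+\mathcal{B}_1$; since the effective-multiplier deviation is a fixed invertible linear image of $\bm{q}(t)-\bm{q}^\star$, this is at most $-\mathcal{K}c_0\lVert\bm{q}(t)-\bm{q}^\star\rVert+\mathcal{B}_1$ with $c_0>0$ depending only on $\C$ and $S$. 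So $\mathbb{E}\{\Psi(t+1)-\Psi(t)\mid\bm{q}(t)\}$ is strictly negative whenever $\lVert\bm{q}(t)-\bm{q}^\star\rVert$ exceeds a radius $D$ formed from $\mathcal{K}$, $c_0$, $\mathcal{B}_1$ and the displacement bound --- all free of $V$, because the $V$-scaling of \eqref{prob:stationaryProbDef}, hence of $g$ and of $[\bm{\eta}^\star,\bm{\beta}^\star]$, cancels in the ratio in \eqref{eq:plolyHedral}.

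With a negative drift outside the ball of radius $D$ and uniformly bounded per-slot increments, the backlog-attraction argument of \cite{HuangDelay} --- essentially a geometric-tail bound for a scalar process with bounded increments whose drift stays below a negative constant outside a bounded set --- produces $V$-free constants $c^\star,\beta^\star$ with $\limsup_{T\to\infty}\frac{1}{T}\sum_{t=0}^{T-1}\Pr\{\lVert\bm{q}(t)-\bm{q}^\star\rVert>D+m\}\le c^\star e^{-\beta^\star m}$. Then \eqref{eq:ThAttrCu} is immediate from $|U_{n,s}(t)-\nu^\star_{n,s}|\le\lVert\bm{q}(t)-\bm{q}^\star\rVert$. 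For \eqref{eq:ThAttrCe}, the identity $B_n(t)-\varepsilon^\star_n=\frac{1}{\C}\big(\sum_s(U_{n,s}(t)-\nu^\star_{n,s})-(Z_n(t)-\zeta^\star_n)\big)$ shows that $|B_n(t)-\varepsilon^\star_n|>((S+1)D+m)/\C$ forces at least one of the $S+1$ coordinate deviations --- hence $\lVert\bm{q}(t)-\bm{q}^\star\rVert$ --- to exceed $D+m/(S+1)$; applying the above bound with $m$ replaced by $m/(S+1)$, plus a union bound over the data-queue and imbalance coordinates, gives the stated $2c^\star e^{-\beta^\star m/(S+1)}$.

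I expect the drift step to be the main obstacle: making the comparison airtight despite (i) \alg{} only \emph{approximately} minimizing $F(t)$, so the slack $\mathcal{B}_1$ must be absorbed into $D$ without killing strict negativity; (ii) the quantities feeding the dual-subgradient update being the linear combinations $U_{n,s}(t)+Z_n(t)$ and $\C Z_n(t)$ rather than $\bm{q}(t)$ itself, so that \eqref{eq:plolyHedral} has to be transported through a change of coordinates; and (iii) verifying that the resulting $D$, $\beta^\star$, $c^\star$ genuinely do not scale with $V$. Once the negative-drift-outside-a-ball property is established, the passage to the exponential tail is the standard argument and may be quoted from \cite{HuangDelay}.
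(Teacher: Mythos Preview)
Your proposal is correct and follows essentially the same route as the paper. The paper's proof also studies the drift of the shifted Lyapunov function $\lVert[\bm{U}(t),\bm{Z}(t)]-[\bm{\nu}^\star,\bm{\zeta}^\star]\rVert^2$ (via Lemma~\ref{lemm:LemmaUpperBoundInnerLemma} with $\bm{\nu}=\bm{\nu}^\star$, $\bm{\zeta}=\bm{\zeta}^\star$), adds and subtracts $V\mathbb{E}\{E_{\text{AP}}(t)\}$ to split the cross term into $g\big([\bm{N}(t),\bm{B}(t)]\big)$ and $g\big([\bm{\eta}^\star,\bm{\beta}^\star]\big)$ with $[\bm{N}(t),\bm{B}(t)]=[U_{n,s}(t)+Z_n(t),\,\C Z_n(t)]$ exactly as you identify, applies \eqref{eq:plolyHedral}, transports through the fixed linear bijection to get $\mathbb{E}\{D^2(t+1)-D^2(t)\mid\bm{q}(t)\}\le\mathcal{B}_2-2\tilde{\mathcal{K}}D(t)$, invokes \cite[Theorem~1]{HuangDelay}, and finishes \eqref{eq:ThAttrCe} with the same identity $B_n(t)-\varepsilon_n^\star=\frac{1}{\C}\big(\sum_s(U_{n,s}(t)-\nu^\star_{n,s})-(Z_n(t)-\zeta^\star_n)\big)$ and union bound. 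Your direct expansion of $\Psi(t+1)-\Psi(t)$ (using $e_{n,s}(t)=0$ so the update is purely additive) is slightly cleaner than the paper's route through Lemma~\ref{lemm:LemmaUpperBoundInnerLemma}, and your ``weak duality'' phrasing for the $-\langle\bm{q}^\star,\mathbb{E}\{\bm{D}(t)\}\rangle$ bound is exactly what the paper writes as a second $\min$, but the substance and all three obstacles you flag match the paper one-for-one.
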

Theorem \ref{sec:attrTheorem} shows that the probability that the  backlog in data queues and energy level in batteries  deviate from $\bm{\nu}^\star$ and $\bm{\varepsilon}^\star$, respectively,  decreases exponentially as the deviation increases. Note that the assumption in \eqref{eq:plolyHedral} holds when the control parameters are chosen from a finite set \cite{HuangDelay} which is  the case for the digital implementation of the algorithm.

\vspace{-0.25cm}
\section{\alg{} Implementation}\label{sec:Implementation}
In this section, we discuss some  implementation issues related to  \alg{}. Specifically, we study the effect of the limited-capacity data buffers and batteries and the complexity of our proposed policy. \vspace{-0.35cm}
\subsection{Limited Data Buffers and Batteries}\label{sec:LimBufferImp}
A challenge for the implementation of  \alg{} is the limited buffer size and battery capacity of the nodes. Theorem \ref{sec:attrTheorem} ensures that with  sufficiently large batteries and buffers the probability of data or energy {overflow} is small. Hence this limitation does not affect the performance of the policy. Figure \ref{fig:sampPath} depicts a sample time evolution of the backlog and battery level under  \alg{}.    As can be seen, the backlog converges to a constant value. Also,  with a buffer size of  $\SI{2.5}{MBytes}$ and a battery capacity of $\SI{17}{\milli\joule}$ there will be no overflow. However, the behavior of the backlog suggests that, if we tolerate  {dropping a small amount of data} in the initialization phase of the algorithm, we can further reduce the buffer size and battery capacity. Specifically, in the steady state region of Figs. \ref{fig:sampBackLog}  and 
\ref{fig:sampBattery} the backlog and the battery level  fluctuate approximately in  $\SI{0.1}{MBytes}$ and $\SI{2}{\milli\joule}$ intervals, respectively,  which implies that the arrival and departure processes in steady state can be supported by a $\SI{0.1}{MBytes}$  buffer   and a $\SI{2}{\milli\joule}$ battery.  This observation motivates us to modify  \alg{} for limited buffer size and battery capacity implementation. For this reason,  we define  virtual  queues $\tilde{U}_{n,s}(t)$ and $\tilde{E}_n(t)$  associated with each real and finite data queue and battery, respectively.  The virtual queues are not physical queues and are  simple counters inside the controller that are updated as 
\vspace{-0.25cm}
\begin{align}
\tilde{U}_{n,s}(t+1) &= \tilde{U}_{n,s}(t)  + \ui - \uo\label{eq:virtualDataEvolve},\\
\tilde{E}_n(t+1) &= \tilde{E}_{n}(t) +\ei -\eo\label{eq:virtualEnergyEvolve}.
\end{align}
 \alg{} runs based on the values of $\tilde{U}_{n,s}(t)$ and $\tilde{E}_{n}(t)$ instead of the real queues, hence $\ui$, $\uo$, $\ei$ and $\eo$ will have exactly the same value  as  we had in the cases with infinite length real data queue and batteries.  Accordingly, the limited buffer sizes and battery capacities do not affect the decisions of the controller. However, some data units in real buffers may be dropped  due to either buffer overflow or energy outage.
 Let $L_{n,s}(t)$ denote the total number of dropped data units of stream $s$ in node $n$ up to time-slot $t$. Lemma \ref{lem:limitedBattery} bounds the time averaged expected value of $L_{n,s}(t)$ under the modified \alg{} for limited buffers and batteries.

\begin{figure}
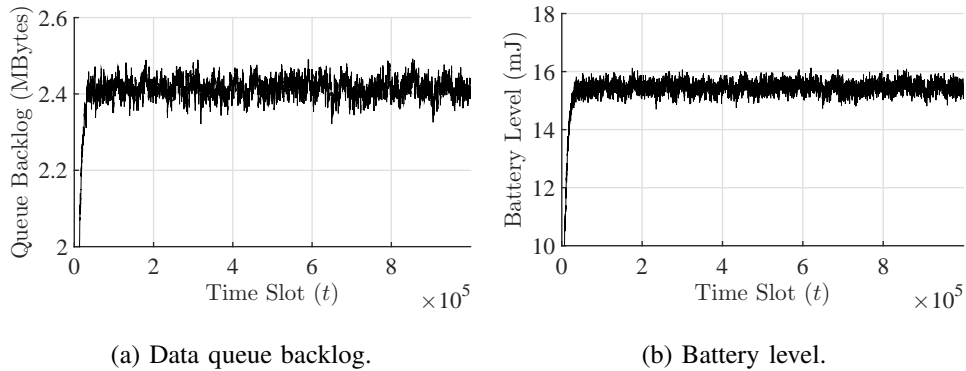

\centering
\begin{subfigure}[b]{0.37\textwidth}
\includegraphics[width = \textwidth]{sampleDataBackLog.eps}
\caption{Data queue backlog.}
\label{fig:sampBackLog}
\end{subfigure}
~
\begin{subfigure}[b]{0.37\textwidth}
\includegraphics[width =\textwidth]{sampleEnergyBackLog.eps}
\caption{Battery level.}
\label{fig:sampBattery}
\end{subfigure}
\vspace{-0.25cm}
\caption{A sample time evolution of the data queue backlog and battery processes. Figures  \ref{fig:sampBackLog} and \ref{fig:sampBattery} correspond to the queue for stream $1$ in node $1$ and the battery in  node $1$ of Fig. \ref{fig:sampNetSim}, respectively. The data arrival rate is $\lambda = 5$ kbps, the Rician $K$-factor $K = 0$ dB and $V = 3\times10^{11}$.  }\vspace{-0.5cm}
\label{fig:sampPath}
\vspace{-0.25cm}
\end{figure}

\begin{lemma}\label{lem:limitedBattery}
 Let $\Uc $ and $\Ec $ denote the size of the data buffers and the capacity of the batteries, respectively. Suppose that $\Uc > 2D +2\uimax$ and $\Ec > {2(S+1)D\over \C} + 2\eimax$. With the modified \alg{},  we have\vspace{-0.2cm}
\begin{align}
\limsup_{T\rightarrow \infty}{1\over T} \eexp{L_{n,s}(T)} \leq  {\uimax c^\star e^{-\beta^\star m_l} } + {2\delta\eimax c^\star e^{-\beta^\star  {m_l\over S+1}} },
\end{align}
where
\vspace{-0.25cm}
\begin{align}\label{eq:m_lDef}
m_l  = \min\bigg\{\Uc/2 - \uimax -D ,\;\C(\Ec/2 - \eimax) -(S+1)D\bigg\}.
\end{align}
\end{lemma}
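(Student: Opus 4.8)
The plan is to reduce the statement to the concentration bounds of Theorem~\ref{sec:attrTheorem}, evaluated along the virtual counters. The first observation is that, under the modified \alg{}, every control action --- and hence the per-slot quantities $\ui,\uo,\ei,\eo$ --- is a function of $\tilde U_{n,s}(t)$, $\tilde E_n(t)$ and the current CSI/arrivals, and by \eqref{eq:virtualDataEvolve}--\eqref{eq:virtualEnergyEvolve} these counters obey exactly the recursions that $U_{n,s}(t)$ and $B_n(t)$ obey in the unconstrained system \eqref{eq:queueEvolve}--\eqref{eq:batteryEvolve}. Hence Theorem~\ref{sec:attrTheorem} applies verbatim with $U_{n,s}(t),B_n(t)$ replaced by $\tilde U_{n,s}(t),\tilde E_n(t)$; in particular, for every fixed $(n,s)$, every fixed $n$, and every $m\ge 0$,
\[
\limsup_{T\to\infty}\frac1T\sum_{t=0}^{T-1}\Pr\{\,|\tilde U_{n,s}(t)-\nu^\star_{n,s}|>D+m\,\}\ \le\ c^\star e^{-\beta^\star m},
\]
\[
\limsup_{T\to\infty}\frac1T\sum_{t=0}^{T-1}\Pr\{\,|\tilde E_n(t)-\varepsilon^\star_n|>((S+1)D+m)/\C\,\}\ \le\ 2c^\star e^{-\beta^\star m/(S+1)}.
\]

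Next I would perform a pathwise comparison of the real, finite-resource system with the virtual one. Write $L_{n,s}(T)=\sum_{t=0}^{T-1}\ell_{n,s}(t)$, where $\ell_{n,s}(t)$ is the amount of stream-$s$ data discarded at node $n$ in slot $t$, and split $\ell_{n,s}(t)$ into a buffer-overflow part and an energy-outage part. On the overflow side: the per-slot arrival into queue $(n,s)$ is at most $\uimax$, so a discard in slot $t$ can occur only when the real occupancy exceeds $\Uc-\uimax$, and it then discards at most $\uimax$ units; and a short induction on $t$ shows the real occupancy of queue $(n,s)$ never exceeds $\tilde U_{n,s}(t)$ --- overflow discards and lost upstream traffic only remove arrivals relative to the virtual system, while the scheduled departures are common to both, so the $[\,\cdot\,]^+$ operations never let the real queue escape the bound --- whence an overflow in slot $t$ forces $\tilde U_{n,s}(t)>\Uc-\uimax$. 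On the outage side: properties \eqref{eq:noPowerNoRate}--\eqref{eq:rateBound} give $R_l(\bm p(t),\bm g(t))\le\delta p_l(t)$ for every link, hence $\uo\le\Td\sum_{l\in\Out_n}R_l(\bm p(t),\bm g(t))\le\delta\eo\le\delta\eimax$, so one outage at node $n$ discards at most $\delta\eimax$ units of stream $s$; and an outage --- node $n$'s real battery falling short of the scheduled drain $\eo\le\eimax$ --- forces $\tilde E_n(t)$ to fall more than $((S+1)D+m_l)/\C$ below $\varepsilon^\star_n$, where $m_l$ from \eqref{eq:m_lDef} is exactly the slack the hypotheses $\Uc>2D+2\uimax$ and $\Ec>\frac{2(S+1)D}{\C}+2\eimax$ leave, once the buffers and batteries are large enough to keep the attraction points $\nu^\star_{n,s},\varepsilon^\star_n$ interior, so that $\Uc-\uimax\ge\nu^\star_{n,s}+D+m_l$ on the data side and $\eimax\le\varepsilon^\star_n-((S+1)D+m_l)/\C$ on the battery side.

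Putting the two parts together, for every $T$,
\[
\mathbb{E}\{L_{n,s}(T)\}\ \le\ \sum_{t=0}^{T-1}\Big(\uimax\,\Pr\{\tilde U_{n,s}(t)>\Uc-\uimax\}+\delta\eimax\,\Pr\{\tilde E_n(t)<\varepsilon^\star_n-\tfrac{(S+1)D+m_l}{\C}\}\Big).
\]
Dividing by $T$, letting $T\to\infty$, and inserting the concentration bounds above with $m=m_l$ (legitimate since $m_l$ does not exceed the relevant slack) gives
\[
\limsup_{T\to\infty}\frac1T\,\mathbb{E}\{L_{n,s}(T)\}\ \le\ \uimax c^\star e^{-\beta^\star m_l}+2\delta\eimax c^\star e^{-\beta^\star m_l/(S+1)},
\]
which is the asserted bound (and $m_l>0$ under the stated hypotheses, so both exponentials are genuinely less than one).

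The genuinely delicate step is the pathwise comparison of the second paragraph: making rigorous that the lossy multi-hop recursion --- with clamping at $0$ and $\Uc$, with energy outages, and with the induced shortfalls propagating to downstream queues --- stays dominated by the virtual recursion, and, on the battery side, where desynchronization is two-sided (outages push the real battery above $\tilde E_n$, overcharge caps push it below), that an outage still certifies a large downward deviation of $\tilde E_n(t)$; the accompanying bookkeeping is checking that the constant $m_l$ in \eqref{eq:m_lDef} is precisely what the buffer and battery assumptions buy. Everything after that is a routine application of Theorem~\ref{sec:attrTheorem} together with linearity of expectation.
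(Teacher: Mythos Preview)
Your reduction to Theorem~\ref{sec:attrTheorem} via the virtual counters is right, as are the decomposition into overflow and outage losses, the bound $\uo\le\delta\eo$, and the pathwise inequality ``real queue $\le\tilde U_{n,s}(t)$''. The gap is in converting your one-sided events to the two-sided concentration bounds. On the data side, an overflow gives $\tilde U_{n,s}(t)>\Uc-\uimax$, but to invoke Theorem~\ref{sec:attrTheorem} with $m=m_l$ you need $\Uc-\uimax\ge\nu^\star_{n,s}+D+m_l$, i.e.\ $\nu^\star_{n,s}\le\Uc/2$. You assert this ``once the buffers \dots\ are large enough to keep the attraction points interior'', but that is \emph{not} a hypothesis of the lemma, which only assumes $\Uc>2D+2\uimax$. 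Since $\nu^\star_{n,s}$ and $\varepsilon^\star_n$ are built from the dual optimizers of \eqref{prob:stationaryProbDef} (whose objective carries the factor $V$) and therefore scale like $V$, your side condition would force $\Uc,\Ec$ to be of order $V$ --- precisely what the modified policy is meant to avoid (cf.\ Fig.~\ref{fig:sampPath} and the surrounding discussion). On the battery side the situation is worse: as you note yourself, the real--virtual comparison is two-sided (past outages push the real battery above $\tilde E_n$, past overcharges push it below), so an outage at node $n$ does \emph{not} certify that $\tilde E_n(t)$ is small, and the step ``outage forces $\tilde E_n(t)$ more than $((S+1)D+m_l)/\C$ below $\varepsilon^\star_n$'' is unsupported.

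The paper avoids both problems by bounding drops through \emph{two-sided} deviations of the virtual processes from their attraction points, which needs no information about where $\nu^\star,\varepsilon^\star$ sit relative to $\Uc,\Ec$. A genie-aided dropping rule shows that the number of buffer drops is at most the number of arrivals during slots with $|\tilde U_{n,s}(\tau)-\nu^\star_{n,s}|>\Uc/2-\uimax$; on the energy side, the data lost to outage is bounded by $\delta$ times the cumulative energy \emph{overflow} $\hat L^e_n$ (not outage), and $\hat L^e_n$ is controlled by the same genie device via the event $|\tilde E_n(\tau)-\varepsilon^\star_n|>\Ec/2-\eimax$. With $m_l$ as in \eqref{eq:m_lDef}, these thresholds are exactly $D+m_l$ and $((S+1)D+m_l)/\C$, so Theorem~\ref{sec:attrTheorem} applies with $m=m_l$ and the rest is the same averaging step you already wrote.
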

\begin{proof}
See Appendix \ref{sec:appLimBB}.
\end{proof}
Lemma \ref{lem:limitedBattery} states that the average drop rate decreases exponentially as the buffer size or the battery capacity  increases. 
\vspace{-0.25cm}
\subsection{Complexity of the Proposed Policy }\label{sec:Complexity}
The most computationally expensive part of  \alg{} is solving Problem \eqref{eq:MW}, which is  similar to the well known {max-weight} problem. Under the common interference models, this problem is nonconvex and can be NP-hard \cite{NPhard}. However, many efficient approximate and  distributed solutions are proposed for the max-weight problem  \cite{ 7017589,distributed1, distributed2_j}, that can be extended to solve  \eqref{eq:MW}. As an example,  \cite{7017589} introduces  a distributed iterative algorithm based on the block coordinate descent method for solving a problem similar to \eqref{eq:MW}.

Note that using the same arguments as in \cite{subOptimalScheduling2}, it can be shown that a suboptimal scheduling  in each time-slot may result in satisfactory overall performance. Specifically, instead of \eqref{eq:tPerformance}, if the suboptimal scheduler satisfies
\vspace{-0.25cm}
\begin{align}
F(t) \leq \gamma F^{\text{min}}(t) + \mathcal{B}_3,
\end{align}
in each time-slot $t$ and for some $ \gamma \in [0,1]$ and $\mathcal{B}_3\in \mathbb{R}$,  the time-averaged expected  energy consumption per time-slot will be close to $\gamma E^{\text{opt}}\left({\bm{\lambda}\over \gamma}\right)$.  Accordingly, we may use approximate schedulers with low complexity or reduce the overhead for CSI estimation, while $\gamma$ remains close to  unity and the performance loss is negligible. Below, we study the performance loss due to imperfect CSI  through  simulation.

\section{Simulation Results}\label{sec:sim}

\begin{figure}
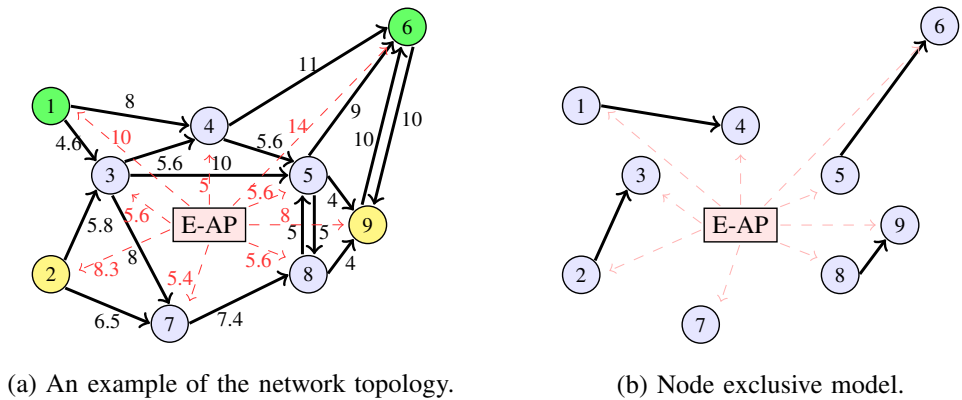

\centering
\begin{subfigure}{0.4\textwidth}
\centering
  \includestandalone[width = 0.8\textwidth]{sampleTopol_Dist}
\caption{An example of the network topology.}
\label{fig:sampNetDist}
\end{subfigure}
~
\begin{subfigure}{0.4\textwidth}
\centering
  \includestandalone[width = 0.8\textwidth]{sampleTopol_NE}
\caption{Node  exclusive model.}
\label{fig:sampNetEx}
\end{subfigure}
\vspace{-0.25cm}
\caption{The numbers besides the links in sub-figure (a) show the lengths of the links in meters.  Sub-figure (b) shows a permitted set of active links under the node exclusive mode.  }\label{fig:sampNetSim}
\vspace{-0.65cm}
\end{figure}


In this section, we consider a wireless network consisting of one E-AP and nine wireless nodes, as shown in Fig.  \ref{fig:sampNetDist}. In this network, there are two streams of data, from nodes 1 and 2 to nodes 6 and 9, respectively. We consider  the node exclusive model in which  the data links are orthogonal  but  each node can transmit or receive {only} over a single data link in each time-slot. This model represents Bluetooth networks in which the neighboring nodes transmit over distinct frequencies and each node is equipped with a single half duplex transceiver  \cite{Chaporkar2008}. Accordingly, under the node exclusive model, in each time-slot  only the links that do not share a common node are permitted to be active. Figure \ref{fig:sampNetEx} depicts a sample permitted set of active links under the node exclusive model.

The energy- and data-link CSI follow the Rician fading model \cite{Rician2015}, that is,
\begin{align}\label{eq:hChannel}
\bm{h}_n(t) = \sqrt{\beta_{h_n} K \over K+1} \bar{\bm{h}}_n(t) + \sqrt{ \beta_{h_n} \over K+1}\bm{h}^w_n(t),
\end{align}
and 
\begin{align}\label{eq:gChannel}
{g}_l(t) = \sqrt{\beta_{g_l} K \over K+1} \bar{{g}}_l(t) + \sqrt{\beta_{g_l}  \over K+1}{g}^w_l(t),
\end{align}
where $ \bar{\bm{h}}_n(t)$ and $\bar{{g}}_l(t)$  are the deterministic  component of the channels, and  ${\bm{h}}^w_n(t)$ and ${g}^w_l(t)$ represent the scattered components of the channel. Moreover, $K$ is the Rician $K$-factor  which determines the ratio between the Rician and the scattered components, and $\beta_{g_l}$ and $\beta_{h_n}$  represent the path loss and shadowing effects of the data links and the energy links, respectively. The entries of the energy link scattered component vector ${\bm{h}}^w_n(t)$ and also the data link scattered component ${g}^w_l(t)$ are independent and zero-mean unit variance circularly symmetric complex Gaussian (CSCG) distributed random variables. The deterministic components, $\bar{\bm{h}}_n(t)$ and $\bar{{g}}_l(t)$, are modeled as  \cite[Eq. (2)]{Rician2015}, and the attenuation factors $\beta_{h_n}$ and $\beta_{g_l}$ are calculated at carrier frequency 2.4~GHz.  Furthermore, in all figures we assume $\lambda_1 = \lambda_2 = \lambda$, $\Pm = \SI{1}{\milli\watt}$ and, unless otherwise  mentioned,  we assume $K=20$ dB, $\Pam = 4$ W and $M=20$. We consider the rate-power function in \cite[Eq. (1)]{shortLen} 
\begin{small}
\begin{align}\label{eq:rateShort}
R_l(\bm{p}(t),\bm{g}(t)) = W\left[\log\left(1+{p_l(t)|g_l(t)|^2\over WN_0} \right) - \sqrt{  {1\over \mathcal{L}}\left[1- { \left(1+ {p_l(t)|g_l(t)|^2\over WN_0}\right)^{-2}  }\right]   } Q^{-1}(\rho)\right],
\end{align}
\end{small}
%
where $W $ and $N_0 $ are  the channel bandwidth and the noise power spectral density, respectively.  Moreover, $\mathcal{L}$ is the length of the codewords  and $\rho$ is the maximum block error probability of the decoder. Hence, the second term inside the brackets in \eqref{eq:rateShort} is notable only in the case of  codewords with finite length. Then, letting $\mathcal{L}  \rightarrow \infty$, \eqref{eq:rateShort} is simplified to \eqref{eq:rateLong} for the cases with asymptotically long codewords, \vspace{-0.2cm}
\begin{align}\label{eq:rateLong}
R_l(\bm{p}(t),\bm{g}(t)) = W \log\left(1+{p_l(t)|g_l(t)|^2\over WN_0} \right).
\end{align}
We assume $W=100$ kHz, $N_0 = -135$ dBm/Hz, $\rho = 10^{-10}$  and, except for Fig. \ref{fig:distance} which studies the system performance for short packets, that codewords are sufficiently long such that the second term inside the brackets in \eqref{eq:rateShort} can be neglected.   

 Considering the data arrival rates $\lambda = \{0.5,1.5\}$ kbps and  the number of E-AP transmit antennas  $M = \{20,40\}$, Fig. \ref{fig:powerVsBacklog} demonstrates the  trade-off between  the average energy consumption per time-slot and the average backlog in the queues. The result in Fig. \ref{fig:powerVsBacklog} conforms to the trade-off introduced in Theorem \ref{th:performanceTh}. That is, the average energy consumption is inversely proportional to  the backlog level.  Furthermore, Theorem \ref{th:performanceTh} implies that for sufficiently large $V$ the gap between the  average energy consumption per time-slot  and  $E^{\text{opt}}(\bm{\lambda}) $ is negligible. Hence, the curves in Fig. \ref{fig:powerVsBacklog} converge  to   $E^{\text{opt}}(\bm{\lambda})$. Comparing the curves for $\lambda = 0.5$ kbps and $\lambda = 1.5$ kbps, we observe that the effect of the number of E-AP's transmit antennas  on $E^{\text{opt}}(\bm{\lambda})$  becomes more dominant as the data arrival rate increases.
\begin{figure}
\centering
\includegraphics[width = 0.39\textwidth]{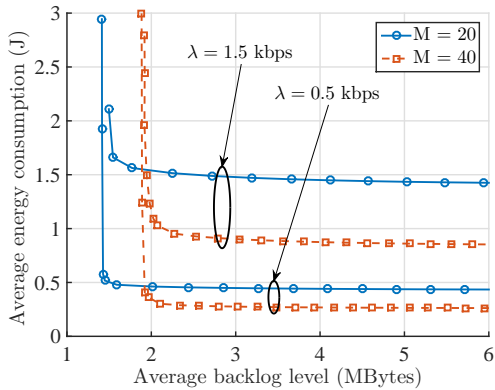}
\vspace{-0.25cm}
\caption{Average energy consumption per time-slot vs the average backlog in data queues. Data arrival rates $\lambda = \{1,3,5\}$ kbps and number of E-AP transmit antennas $M = \{20,40\}$. } \vspace{-0.5cm}
\label{fig:powerVsBacklog}
\vspace{-0.25cm}
\end{figure}

Figures \ref{fig:stream1Flow} and \ref{fig:stream2Flow}  show the average throughput of  streams 1 and  2 over   different data links, respectively. This figure is plotted for $\lambda = 2$ kbps and $V = 10^{11}$. We observe in Fig. \ref{fig:flow} that the data is mostly routed through the shorter links, e.g.,   Fig. \ref{fig:stream1Flow} indicates that  stream 1 reaches  node 4 through  node 3 instead of being directly transmitted. Transmitting over a shorter link reduces the energy consumption of  node 1 that is far from the E-AP and suffers from high energy-link path loss.   Figure \ref{fig:stream2Flow} implies that stream 2 is  routed through two dominant paths. Specifically, the first path includes  nodes 3, 4 and 5, and the second path includes  nodes 7 and 8.  Although the two paths seem to be symmetric according to the topology,  the nodes in the first path are more congested. Hence, the  throughput of stream 2 in the second path is approximately {4.5 times larger than the throughput of} the first path.   Furthermore, the sizes of the nodes in Fig. \ref{fig:flow} represent their average  queue backlog levels, which shows that the average backlog level in the nodes increases when the number of hops between the nodes and the destination of the streams increases. {This is intuitive because under the routing policy in \eqref{eq:routing} and   the link scheduling policy in \eqref{eq:MW} the probability of transmitting stream $s$  over  link $l$ with $U_{N_h(l),s}(t) - U_{N_t(l),s}(t) \leq 0 $ is small.}

\begin{figure}
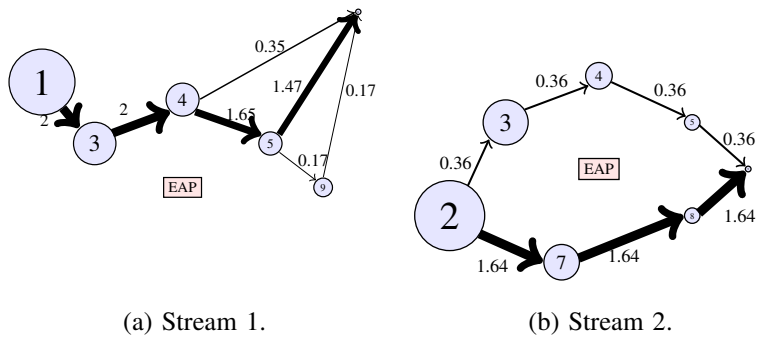

\centering
\begin{subfigure}[b]{0.3\textwidth}
\includegraphics[width = \textwidth]{sampleTopol_Stream1_New.tex}
\caption{Stream 1. }
\label{fig:stream1Flow}
\end{subfigure}
~
\begin{subfigure}[b]{0.3\textwidth}
\includegraphics[width =\textwidth]{sampleTopol_Stream2_New.tex}
\caption{Stream 2. }
\label{fig:stream2Flow}
\end{subfigure}
\vspace{-0.25cm}
\caption{Flow of the data streams in the network, considering $\lambda = 2$ kbps. The numbers on the links are the average throughput of the links in kbps. The thickness of the links and the size of the nodes are proportional to the throughput of the links and average backlog of the queues, respectively.}\vspace{-0.5cm}
\label{fig:flow}
\vspace{-0.25cm}
\end{figure}
 Consider a limited-capacity data buffer and battery implementation of  \alg{} with battery capacity $\mathcal{E}_c = \{0.4,0.8,1.2\}$ mJ and data buffer size $\mathcal{U}_c = \{25, 50 , \ldots, 500\}$ kBytes, Fig. \ref{fig:limBuf} demonstrates the steady state average percentage of the dropped  data with the modified policy of Section \ref{sec:LimBufferImp}.  Here, the results are obtained for $\lambda = 5$ kbps, $K = 0$ dB and $V = 3\times10^{11}$.   As  can be seen in the figure, the percentage of dropped data decreases rapidly as the capacity of the buffer or the battery increases. Specifically,  using batteries with $0.8$ mJ capacity, we observe almost zero drop rate due to energy outage and, consequently, the drop rate becomes independent of the battery capacity for large values of the battery capacity. Moreover,  using data buffers with $200$ kBytes capacity, no data overflow will occur and any further increment of the data buffer size is not necessary.  This result conforms to the result in  Lemma \ref{lem:limitedBattery}, which implies that the average probability of  dropping data units decreases rapidly as the buffer size and the battery capacity increase.

\begin{figure}
\centering
\includegraphics[width =0.39 \textwidth]{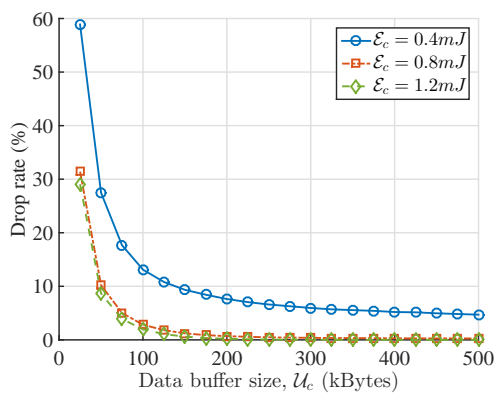}
\vspace{-0.25cm}
\caption{The percentage of dropped bits vs the data buffer capacity and the battery capacity.  Data arrival rate $\lambda = 5$ kbps, $K=0$ dB and $V = 3\times 10^{11}$.  }\vspace{-0.5cm}
\label{fig:limBuf}
\vspace{-0.25cm}
\end{figure}

Considering Rician $K$-factors $K= \{5,10,20\}$ dB and $\lambda =1$ kbps,  Fig. \ref{fig:chanEstErr} studies the effect of the CSI estimation error on the energy consumption. We model  the CSI estimation error as  in  \cite{Rician2015}. In this model, the deterministic component of the channel is assumed to be known and the scattered component is estimated by pilot transmission. Specifically, let  $\hat{\bm{h}}^w_n(t)$ and $\hat{g}^w_l(t)$ denote the estimated scattered component of the energy links and data links, respectively. Moreover,  let  $\tilde{\bm{h}}^w_n(t)\triangleq {\bm{h}^w}_n(t)- \hat{\bm{h}}^w_n(t)$ and $\tilde{g}^w_l(t)\triangleq {g}^w_l(t)- \hat{g}^w_l(t)$ denote the CSI estimation error of the energy links and data links, respectively. The entries of $\tilde{\bm{h}}^w_n(t)$  are i.i.d. zero mean CSCG  random variables with variance $\sigma^2_{h_n}    \triangleq  \left({  \beta_{h_n}   \psi_{p}^h \over \sigma_N(K+1) }+1\right)^{-1}$, and   $\tilde{g}^w_l(t)$ is an i.i.d. zero mean CSCG  random variable with variance $\sigma^2_{g_l} \triangleq \left( {  \beta_{g_l} \psi_p^g\over \sigma_N(K+1) }+1\right)^{-1}$.
Here, $\psi_p^h$ and $\psi_p^g$   are the pilots' energy used for energy link and data link CSI estimation, respectively, and $\sigma_N$  is the variance of the received noise during pilot transmission. In Fig. \ref{fig:chanEstErr}, the energy consumption under  \alg{} is plotted versus  the pilots' energy.  Here, the results are presented for $\sigma_N = -90$ dBm and  sufficiently large values of $V$ such that  the gap between the average energy consumption and $E^{\text{opt}}(\bm{\lambda}) $ is negligible. Moreover, for every value of $K$ three cases are considered, namely, imperfect data-link CSI ($\psi_p^g = \psi_p, \psi_p^h = \infty $), imperfect energy-link CSI ($\psi_p^h = \psi_p, \psi_p^g = \infty $) and imperfect data-link and energy-link CSI ($\psi_p^g = \psi_p^h = \psi_p$), where $\psi_p = \{10^0,10^{0.5}, \ldots, 10^7\}\mu$J.

 An imperfect CSI results in suboptimal scheduling in each time-slot which,  according to the discussions in Section \ref{sec:Complexity}, may still  lead to a satisfactory overall performance. The result in Fig.  \ref{fig:chanEstErr} indicates the excessive energy consumption due to the imperfect CSI-based suboptimal scheduling. With large values of $K$, that is, when the line-of-sight components of the channels are dominant, the effect of imperfect CSI is negligible. Hence, the resources allocated to pilot transmission, i.e., time and energy,  can be saved by avoiding pilot transmission  in every time-slot.  Moreover, as demonstrated in Fig. \ref{fig:chanEstErr}, when   $\psi_p$ exceeds $10^4$,  $10^5$, and $10^6\mu$J for the cases with $K = 5, 10$ and $ 20$ dB, respectively, the energy consumption is almost equal  to that in   the cases with perfect CSI. Hence, any further increment of the pilots' energy has marginal effect on energy consumption. Also, when $\psi_p$ becomes less than $10^2$, $10^3$  and $10^4\mu$J for the cases $K = 5, 10$ and $ 20$ dB, respectively,  the energy consumption becomes independent of the pilots energy. This is because for small values of $\psi_p$ the estimates   $\hat{\bm{h}}^w_n(t)$ and $\hat{g}^w_l(t)$ are almost independent of their exact values. The results for different values of $K$ imply that when   the scattered component  is dominant, i.e., with low values of $K$,  the  energy consumption decreases. This is  intuitive because  \alg{} takes advantage of the  diversity introduced by the scattered component, that is, the nodes avoid transmitting in time-slots with low channel gain and save their energy for possible transmission in subsequent time-slots with higher channel gain.   Also, it should be noted that in practice when the scattered component becomes more dominant the path loss  increases. {Hence,  when reducing the value of $K$, there will be a trade off between the gain introduced by the diversity and the loss due the increased path loss. Here, we have only studied the diversity effect. }

%
%
%
%


\begin{figure}
\centering
\includegraphics[width =0.39 \textwidth]{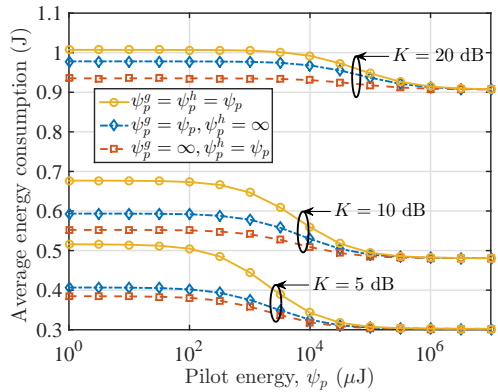}
\vspace{-0.25cm}
\caption{Average energy consumption per time-slot vs the average  backlog in data queues when the energy- and/or data-link CSI is imperfect. The data arrival rate $\lambda= 1 $ kbps.}\vspace{-0.5cm}
\label{fig:chanEstErr}
\end{figure}

Considering the maximum E-AP transmission power  $\Pam = \{3,4,5\}$ W, Fig. \ref{fig:netCap} demonstrates  the   average backlog  in the data queues  versus the  data arrival rate. Theorem \ref{th:performanceTh} states that the average backlog under  \alg{} remains finite if  the input rate is inside the capacity region of the network. Accordingly, Fig. \ref{fig:netCap} shows the maximum value of $\lambda$  that is supported by  \alg{} or every alternative controlling policy. As an example, using Fig. \ref{fig:netCap}, we conclude that for $\Pam = 4$ W no controlling policy can support the streams with arrival rates $\lambda_1 = \lambda_2 \geq 3$ kbps. 

\begin{figure}
\centering
\includegraphics[width =0.39 \textwidth]{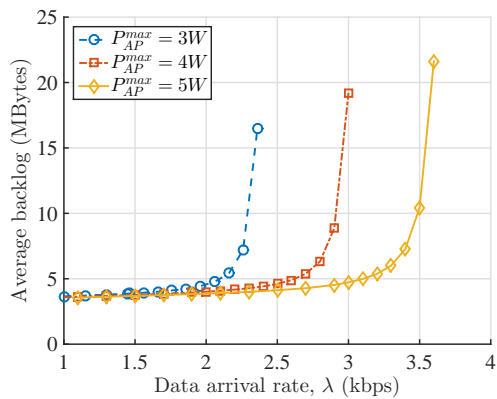}
\vspace{-0.25cm}
\caption{Average backlog in the queues normalized to  data arrival vs the  data arrival rate for maximum transmission power of the E-AP  $P_{\text{AP}} = \{ 3,4,5\}$ W and $V = 10^{11}$. }\vspace{-0.5cm}
\label{fig:netCap}
\vspace{-0.25cm}
\end{figure}

Figure \ref{fig:distance} studies the effect of the nodes' distances and  the finite length codewords on the energy consumption of our proposed policy. For this reason, we consider the  topology in Fig.   \ref{fig:sampNetDist}  and  two scaled versions of this topology,  such that every distance in  Fig.   \ref{fig:sampNetDist}  is scaled by a factor of $1.1$ and $1.2$, respectively.  Figure  \ref{fig:distance} demonstrates the average energy consumption per time-slot under  \alg{} versus the codewords length.   This figure is plotted for sufficiently large values of $V$ such that the gap between the average energy consumption and $E^{\text{opt}}(\bm{\lambda}) $ is negligible. Figure \ref{fig:distance} implies that the average energy consumption is considerably affected by the length of short packets. However, this effect is negligible as the codewords' length increases.  Moreover, the sensitivity of the average energy consumption to the length of short codewords becomes more dominant, when the distances increase. {Also, we observe in  Fig. \ref{fig:distance} that the average energy consumption increases with distance considerably,  because of the high sensitivity of  the path loss to the distance. }

%
%
%

\begin{figure}
\centering
\includegraphics[width =0.4 \textwidth]{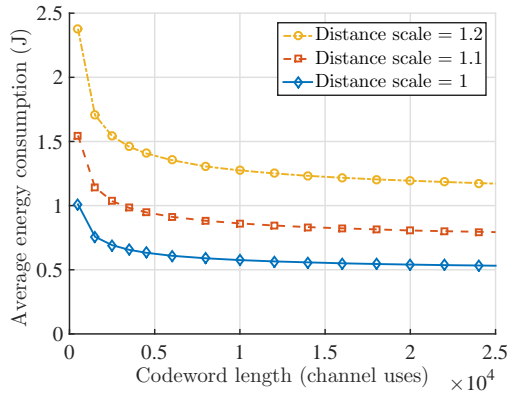}
\vspace{-0.25cm}
\caption{Average energy consumption per time-slot vs the codewords length under scaled distances. 
The distances in  Fig. \ref{fig:sampNetDist}  are scaled by a factor of  $1.1$ and $1.2$.  Data arrival rate $\lambda =0.5$ kbps.}
\label{fig:distance}\vspace{-0.5cm}
\end{figure}

\vspace{-0.25cm}
\section{Conclusion}\label{sec:conclude}
In this paper, we studied  a wirelessly-powered communication network with battery-operated nodes. We proposed  a joint power allocation, data routing, data/energy transmission time sharing and energy beamforming policy to stabilize the network, while minimizing the average energy consumption in the E-AP.  We analyzed the behavior of the backlog in the queues and the stored energy in the batteries. Also, we proposed a modified version of the policy that significantly reduces the data buffer sizes and battery capacities, while dropping only a small portion of the data. As shown, the energy consumption is inversely proportional to the  queue backlogs. {Moreover, with an energy-efficient  routing policy  data is routed through the shorter links and  the nodes that are closer to the E-AP.  Also, as  was observed, the energy consumption  increases in the cases with more dominant  line-of-sight channel component. Finally, the sensitivity of our  performance metrics, i.e., energy consumption and average backlog, to the system parameters such as codeword length and nodes distance  increases as the data arrival rate increases or the capacity of the network reduces. }

\appendices


\vspace{-0.20cm}
\section{Proof of Lemma \ref{lem:upperBound}}\label{sec:proofUpperBound}
 To prove Lemma \ref{lem:upperBound}, we first  introduce  Lemma \ref{lemm:LemmaUpperBoundInnerLemma} which is more general than what is necessary to prove  \eqref{eq:dppUpperBound} in Lemma \ref{lem:upperBound}. However, it will be useful later in the proof of Theorem \ref{sec:attrTheorem}. 
\begin{lemma}\label{lemm:LemmaUpperBoundInnerLemma}
Consider two arbitrary vectors $\bm{\nu}= [\nu_{n,s}\geq 0 , \forall (n,s)]$ and  $\bm{\zeta}= [\zeta_{n,s}\geq0, \forall (n,s)]$. For all time-slots $t$, we have\vspace{-0.2cm}
\begin{small}
\begin{align}\label{eq:LemmaUpperBoundInnerLemma}
\begin{split}
&\lVert \bm{U}(t+1) -\bm{ \nu} \rVert^2 + \lVert \bm{Z}(t+1) -\bm{ \zeta} \rVert^2 - (\lVert \bm{U}(t) -\bm{ \nu} \rVert^2 + \lVert \bm{Z}(t) -\bm{ \zeta} \rVert^2)  \leq \mathcal{B}_0  \\
&\qquad +2 \sum_{n,s}\left(U_{n,s}(t)- \nu_{n,s}\right)\left( \ui- \uo  \right) +2\sum_{n}\left (Z_{n}(t) - \zeta_{n}\right)\bigg(\sum_s \ui-\sum_s\uo \\
&\qquad\qquad-\C\left(\ei -\eo  \right) \bigg)+2\sum_{n,s}e_{n,s}(t)(U_{n,s}(t)+Z_n(t)).
\end{split}
\end{align}
\end{small}
\end{lemma}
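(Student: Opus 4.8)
The plan is to handle the data-queue part $\lVert\bm{U}(t+1)-\bm{\nu}\rVert^2-\lVert\bm{U}(t)-\bm{\nu}\rVert^2$ and the imbalance-indicator part $\lVert\bm{Z}(t+1)-\bm{\zeta}\rVert^2-\lVert\bm{Z}(t)-\bm{\zeta}\rVert^2$ separately, expand each one as a quadratic-drift identity, and then add the two. This is the classical quadratic Lyapunov-drift computation, with the single twist that the over-shoot terms $e_{n,s}(t)$ are retained rather than discarded.

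First I would put both recursions into additive (projection-free) form. Applying the elementary identity $[x]^+=x+[-x]^+$ to \eqref{eq:queueEvolve} with $x=U_{n,s}(t)-\uo$, and recalling that $e_{n,s}(t)=[\uo-U_{n,s}(t)]^+$, gives $U_{n,s}(t+1)=U_{n,s}(t)-\uo+\ui+e_{n,s}(t)$. Substituting this, together with the already linear battery recursion \eqref{eq:batteryEvolve}, into the definition \eqref{eq:Zdef} of $Z_n$ yields $Z_n(t+1)=Z_n(t)+\sum_s(\ui-\uo)-\C(\ei-\eo)+\sum_s e_{n,s}(t)$.

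Next, for each $(n,s)$ I would write $U_{n,s}(t+1)-\nu_{n,s}=(U_{n,s}(t)-\nu_{n,s})+(U_{n,s}(t+1)-U_{n,s}(t))$, square, and subtract $(U_{n,s}(t)-\nu_{n,s})^2$, obtaining
\begin{align*}
(U_{n,s}(t+1)-\nu_{n,s})^2-(U_{n,s}(t)-\nu_{n,s})^2
&=2(U_{n,s}(t)-\nu_{n,s})(\ui-\uo)\\
&\quad+2(U_{n,s}(t)-\nu_{n,s})e_{n,s}(t)+(U_{n,s}(t+1)-U_{n,s}(t))^2,
\end{align*}
and the analogous expansion for $Z_n$ using the recursion above. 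Two observations finish the estimate: (i) since $\nu_{n,s}\geq0$ and $e_{n,s}(t)\geq0$ we have $-2\nu_{n,s}e_{n,s}(t)\leq0$, so the $e$-cross term is bounded by $2U_{n,s}(t)e_{n,s}(t)$; the same argument on the $\bm{Z}$ side (now using $\zeta_n\geq0$) bounds its $e$-cross term by $2Z_n(t)\sum_s e_{n,s}(t)$, and summing the two gives exactly $2\sum_{n,s}e_{n,s}(t)(U_{n,s}(t)+Z_n(t))$; (ii) the squared one-slot increments are constants, because $\ui,\uo\leq\uimax$ forces $|U_{n,s}(t+1)-U_{n,s}(t)|\leq\uimax$, and then $\ei,\eo\leq\eimax$ forces $|Z_n(t+1)-Z_n(t)|\leq S\uimax+\C\eimax$.

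The remaining step is bookkeeping: summing over $(n,s)$ and over $n$ and adding, the constant terms total $NS\uimax^2+N(S\uimax+\C\eimax)^2=\mathcal{B}_0$, the first-order terms are precisely the two sums in \eqref{eq:LemmaUpperBoundInnerLemma}, and the $e$-terms collapse as above. I do not expect a genuine obstacle here; the only points requiring care are not dropping the $e_{n,s}(t)$ terms (unlike the textbook drift bound) and keeping the signs straight, so that the $\nu_{n,s}$- and $\zeta_n$-cross terms are discarded while the $U_{n,s}(t)$- and $Z_n(t)$-cross terms survive on the right-hand side.
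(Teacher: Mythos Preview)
Your proposal is correct and follows essentially the same approach as the paper's own proof: linearize the queue recursion via $e_{n,s}(t)$, expand the squared shifted terms, discard the nonpositive cross terms $-2\nu_{n,s}e_{n,s}(t)$ and $-2\zeta_n\sum_s e_{n,s}(t)$, and bound the squared one-slot increments by $\uimax^2$ and $(S\uimax+\C\eimax)^2$, then sum. The paper's write-up is virtually identical to your outline, so there is nothing to add.
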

\begin{proof}
Considering $U_{n,s}(t+1), \;\forall n,s,$ and \eqref{eq:queueEvolve},  we have
\begin{small}
\begin{align}\label{eq:lemmaUBU}
\begin{split}
&\left(U_{n,s}(t+1) - \nu_{n,s}\right)^2 = \left([U_{n,s}(t) - \uo]^++\ui -\nu_{n,s}\right)^2 =\left(U_{n,s}(t) +e_{n,s}(t) - \uo+\ui -\nu_{n,s}\right)^2 \\
& \qquad\overset{\text{(a)}}{\leq} \left(U_{n,s}(t)- \nu_{n,s}\right)^2+ 2 \left(U_{n,s}(t)- \nu_{n,s}\right)(\ui-\uo)  +2U_{n,s}(t)e_{n,s}(t) +  \left(e_{n,s}(t) + \ui-\uo \right)^2  \\
&\qquad\overset{\text{(b)}}{\leq}  \left(U_{n,s}(t)- \nu_{n,s}\right)^2+ 2 \left(U_{n,s}(t)- \nu_{n,s}\right)(\ui-\uo) + 2U_{n,s}(t)e_{n,s}(t) + \uimax^2,
\end{split}
\end{align}
\end{small}
where (a) holds since the term $-2\nu_{n,s}e_{n,s}(t)\leq 0 $ is removed. The inequality (b)  holds since $ |e_{(n,s)}(t)+\ui-\uo| \leq \uimax$ and, consequently,  $\left(e_{(n,s)}(t)+\ui-\uo\right)^2 \leq \uimax^2$. Furthermore,  considering $Z_n(t+1)$, we have\vspace{-0.15cm}
\begin{small}
\begin{align}\label{eq:lemmaUBZ} 
\begin{split}
(Z_{n}(t+1) - \zeta_{n})^2 &= \left(\sum_sU_{n,s}(t+1)- \C B_n(t+1) - \zeta_{n}\right)^2 \\
&\overset{\text{(a)}}{=}\left (Z_{n}(t)- \zeta_{n}+\sum_s e_{n,s}(t)+\sum_s\left(\ui-\uo\right) -\C\left(\ei-\eo\right)   \right)^2\\
&\overset{\text{(b)}}{\leq} \left(Z_{n}(t) - \zeta_{n}\right)^2+2\left (Z_{n}(t) - \zeta_{n}\right)\left( \sum_s(\ui-\uo) -\C(\ei-\eo)  \right)+ \\
&\qquad 2Z_n(t)\sum_s e_{n,s}(t) + \left( \sum_s\left(e_{n,s}(t) + \ui-\uo\right) -\C\left(\ei-\eo\right) \right)^2 \\
&\overset{\text{(c)}}{\leq}\left(Z_{n}(t) - \zeta_{n}\right)^2+2\left (Z_{n}(t) - \zeta_{n}\right)\left( \sum_s(\ui-\uo) -\C(\ei-\eo)  \right)+\\
&\qquad 2Z_n(t)\sum_s e_{n,s}(t)+ (S\uimax +\C\eimax)^2 , 
\end{split}
\end{align}
\end{small}
where (a) can be verified using \eqref{eq:queueEvolve}, \eqref{eq:batteryEvolve} and \eqref{eq:Zdef}. The inequality (b) comes from $\zeta_n \sum_se_{n,s}(t) \geq 0$, and the inequality (c) holds since $| \sum_s\left(e_{n,s}(t)+\ui-\uo\right) -\C\left(\ei-\eo\right)|\leq S\uimax + \C \eimax$. Taking summation over $n,s$ of both sides in \eqref{eq:lemmaUBU}, we obtain \vspace{-0.15cm}
\begin{small}
\begin{align}\label{eq:lag1}
\begin{split}
||\bm{U}(t+1) - \bm{\nu}||^2 \leq &||\bm{U}(t) - \bm{\nu}||^2 + 2\sum_{n,s}\left(U_{n,s}(t)-\nu_{n,s}(t)\right)(\ui-\uo)+2\sum_{n,s}U_{n,s}(t)e_{n,s}(t)+ \uimax^2.
\end{split}
\end{align}
\end{small}
Moreover, taking summation over $n$ of both sides in \eqref{eq:lemmaUBZ}, we obtain 
\begin{small}
\begin{align}\label{eq:lag2}
\begin{split}
||\bm{Z}(t+1) - \bm{\zeta}||^2 \leq& ||\bm{Z}(t) - \bm{\zeta}||^2 + 2\sum_{n,s}\left(Z_{n,s}(t)-\zeta_{n,s}(t)\right)\bigg(\ui-\uo -\C \left(\ei-\eo\right)\bigg)+\\
&2\sum_{n,s}Z_{n,s}(t)e_{n,s}(t)+ (S\uimax +\C\eimax)^2.
\end{split}
\end{align}
\end{small}
Summing both sides of \eqref{eq:lag1} and \eqref{eq:lag2} and rearranging the terms,  \eqref{eq:LemmaUpperBoundInnerLemma} is proved.
\end{proof}
\begin{proof}[Proof of Lemma \ref{lem:upperBound}]
Noting  that $L(t) ={1\over 2} \lVert \bm{U}(t)\rVert ^2 + {1\over 2}\lVert \bm{Z}(t)\rVert ^2$,  Lemma \ref{lem:upperBound} can be  proved using  \eqref{eq:LemmaUpperBoundInnerLemma} by setting $\bm{\nu}$ and $\bm{\zeta}$ to all zero vectors, adding $VE_{\text{AP}}(t)$ to  both sides  and taking expectation conditioned on $\bm{U}(t)$ and $\bm{Z}(t)$.  
\end{proof}

\vspace{-0.20cm}

\section{ Proof of Lemma \ref{lem:AlgProp}}\label{sec:ProofAlgProp}

\textbf{Proof of the first claim:} We prove the first claim by induction. It is  straightforward to show that $Z_n(t) \geq \uimax$ is satisfied  at $t=0$. Assuming  $Z_n(t) \geq \uimax$ for some $t\geq0$,  we have\vspace{-0.15cm}
\begin{align}\label{eq:inductionProof}
\begin{split}
Z_n(t+1) &= \sum_s U_{n,s}(t+1) -\C B_n(t+1) {=}  Z_n(t) + \sum_s e_{n,s}(t) + \\
&\qquad \sum_s \ui-\sum_s \uo + \C \eo - \C \ei \overset{\text{(a)}}{\geq} Z_n(t)   - \C \ei \overset{\text{(b)}}{\geq} \uimax.
\end{split}
\end{align}
The inequality (a) holds by neglecting the positive terms and noting that
\begin{small}
\begin{align}\label{eq:partAjutify}
 \C \eo-\sum_s \uo & = \Td\left(\C \sum_{l\in\Out_n }p_l(t) - \sum_s \sum_{ l\in\Out_n} R_{l,s}(t)\right)\geq \Td\left(\C \sum_{l\in\Out_n }p_l(t) -  \sum_{ l\in\Out_n} R_{l}(\bm{p}(t),\bm{g}(t))\right)\nonumber \\ 
& \overset{\text{(a)}}{\geq} \Td\left(\C \sum_{l\in\Out_n}p_l(t) -  \delta \sum_{ l\in \Out_n} p_l(t)\right)   \geq 0.
\end{align}
\end{small}
The inequality (a) in  \eqref{eq:partAjutify} comes from  \eqref{eq:rateBound}, and the last inequality holds since $\C\geq \delta$. The inequality (b) in  \eqref{eq:inductionProof} holds because, from \eqref{eq:inEnergy}, we have $ \C \ei \leq Z_n(t) -\uimax$.  Note that we need the assumption $Z_n(t) \geq \uimax$, since otherwise $\ei = \min\{\Eh, (Z_n(t)-\uimax)/\C\}$ becomes negative, which is not feasible.   Equation \eqref{eq:inductionProof} implies $Z_n(t+1) \geq \uimax$. Hence, the first claim is proved.

\textbf{Proof of the second claim:}  Assume\vspace{-0.35cm}
 \begin{align}\label{eq:claim2Ass1}
U_{n,s}(t) \geq\Ut,\; \forall n,s.
\vspace{-0.25cm}
\end{align} 
 In the following, we show that this assumption holds for all time-slots $t$.  Consider  link $\tilde{l}$ and  stream $s_{\tilde{l}}(t)$, which is defined in \eqref{eq:slDef}. We  use contradiction to show that if \vspace{-0.15cm}
\begin{align}\label{eq:claim2Ass2}
U_{N_h(\tilde{l}),s_{\tilde{l}}(t)} (t) \leq\Ut +\uimax,
\end{align}
 no power will be assigned to  link $\tilde{l}$  in the optimal solution of \eqref{eq:MW}, and hence, no data will be transmitted over  link $\tilde{l}$.
 Let $\bm{p}(t)$ denote the optimal solution of \eqref{eq:MW}. Assume   that $p_{\tilde{l}}(t)$ is nonzero and  \eqref{eq:claim2Ass2} holds. Moreover, let $\bar{\bm{p}}(t)$ denote a power vector, such that \vspace{-0.25cm}
\begin{align}\label{eq:pbarDef}
\begin{cases}
\bar{p}_l(t) = p_{l}(t), &  \forall l \neq \tilde{l},\\ 
\bar{p}_{\tilde{l}}(t) = 0.
\end{cases}
\vspace{-0.25cm}
\end{align} 
Then, we have 
\begin{small}
\begin{align}\label{appProp:cont}
\begin{split}
&\sum_l [ \C Z_{\Tnode}(t) p_l(t) - W_l(t) R_l(\bm{p}(t),\bm{g}(t))]- \sum_l [ \C Z_{\Tnode}(t) \bar{p}_l(t) - W_l(t) R_l(\bar{\bm{p}}(t),\bm{g}(t))]  \\
&\quad \overset{\text{(a)}}{=} \C Z_{N_h(\tilde{l})}(t) p_{\tilde{l}}(t)- W_{\tilde{l}} (t)R_{\tilde{l}}(\bm{p}(t),\bm{g}(t))- \sum_{l \neq \tilde{l}} W_l (t) [R_l(\bm{p}(t),\bm{g}(t)) -R_l(\bar{\bm{p}}(t),\bm{g}(t)) ] \\
&\quad \overset{\text{(b)}}{\geq}\C Z_{N_h(\tilde{l})}(t) p_{\tilde{l}}(t)  - W_{\tilde{l}}(t) R_{\tilde{l}}(\bm{p}(t),\bm{g}(t)) \overset{\text{(c)}}{\geq} p_{\tilde{l}}(t)\left[\C Z_{N_h(\tilde{l})}(t) - \delta W_{\tilde{l}}(t)\right ]  \overset{\text{(d)}}{\geq}  p_{\tilde{l}}(t) Z_{N_h(\tilde{l})}(t) (\C- \delta) \geq 0,
\end{split}
\end{align}
\end{small}
where (a), (b) and (c) hold due to the properties of the rate-power function in \eqref{eq:noPowerNoRate}, \eqref{eq:interference} and \eqref{eq:rateBound}, respectively.
To verify that the inequality (d) holds, it suffices to show that $W_{\tilde{l}}(t) \leq Z_{N_h(\tilde{l})}(t)$. Note that, from \eqref{eq:dataCoeff} and \eqref{eq:WlDef},  we have $W_{\tilde{l}}(t) = Z_{N_h(\tilde{l})}(t) - Z_{N_t(\tilde{l})}(t) + U_{N_h(\tilde{l}), s_{\tilde{l}}(t)}(t) - U_{N_t(\tilde{l}), s_{\tilde{l}}(t)}(t)$. The assumptions \eqref{eq:claim2Ass1} and \eqref{eq:claim2Ass2} imply that $U_{N_h(\tilde{l}), s_{\tilde{l}}(t)}(t) - U_{N_t(\tilde{l}), s_{\tilde{l}}(t)}(t) \leq \uimax$. Moreover, from part 1 in Lemma \ref{lem:AlgProp}, we have  $Z_{N_t(\tilde{l})}(t) \geq \uimax$. Accordingly, we obtain $ U_{N_h(\tilde{l}), s_{\tilde{l}}(t)}(t) - U_{N_t(\tilde{l}), s_{\tilde{l}}(t)} (t)- Z_{N_t(\tilde{l})}(t)\leq 0$ and, consequently, $W_{\tilde{l}}(t) \leq Z_{N_h(\tilde{l})}(t)$. The final inequality in \eqref{appProp:cont} contradicts the optimality of $\bm{p}(t)$, and implies that $p_{\tilde{l}}(t) = 0$ when the assumptions  \eqref{eq:claim2Ass1} and \eqref{eq:claim2Ass2} hold.

To complete the proof of the second claim, we need to show that the assumption \eqref{eq:claim2Ass1} holds for all time-slots $t$. For this reason, note that at  $t=0$, \eqref{eq:claim2Ass1} holds due to the initialization step. We show that \eqref{eq:claim2Ass1}  holds for $t\geq0$ by induction. Specifically,  assume  \eqref{eq:claim2Ass1}   holds in time-slot $t$.  For the queues that satisfy   $U_{n,s}(t) \geq \Ut+\uimax$, we have  $U_{n,s}(t+1) \geq \Ut$ since  $\uo \leq \uimax$.  Moreover, consider the pair $(l,s_l(t))$ with  $U_{\Tnode,s_l(t)}(t) < U_0+\uimax$. According to \eqref{appProp:cont}, we have $p_l(t)=0$ and, consequently, $R_l(\bm{p}(t),\bm{g}(t))(t) =0 $, which implies that no data will exit    $U_{\Tnode,s_l(t)}(t)$. Hence,  we have $U_{\Tnode,s_l(t)}(t+1)=U_{\Tnode,s_l(t)}(t)\geq \Ut$. This completes the proof of the second claim.

%
%

\textbf{Proof of the third claim:} We use contradiction to prove the third claim.  Assume
 \begin{align}\label{eq:claim3Ass1}
B_{\tilde{n}}(t) \leq \eimax,
\end{align}
 and consider $\tilde{l} \in \Out_{\tilde{n}}$. Moreover, let $\bm{p}(t)$ be the optimal solution of \eqref{eq:MW}, and assume that  $p_{\tilde{l}}(t)$ is nonzero. We define $\bar{\bm{p}}(t)$ as in \eqref{eq:pbarDef}. Accordingly,  we have 
\begin{small}
\begin{align}\label{appProp:contBat}
\begin{split}
&\sum_l [ \C Z_{\Tnode}(t) p_l(t) - W_l(t) R_l(\bm{p}(t),\bm{g}(t))]- \sum_l [ \C Z_{\Tnode}(t) \bar{p}_l(t) - W_l(t) R_l(\bar{\bm{p}}(t),\bm{g}(t))]  \\
 &\quad \overset{\text{(a)}}{\geq} p_{\tilde{l}}(t)\left[\C Z_{\tilde{n}}(t) - \delta W_{\tilde{l}}(t)\right ]\overset{\text{(b)}}{\geq} p_{\tilde{l}}(t)\left[\C Z_{\tilde{n}}(t) - \delta (Z_{\tilde{n}}(t)+ U_{\tilde{n}, s_{\tilde{l}} (t) }(t)  )  \right ]\\
& \quad = p_{\tilde{l}}(t)Z_{\tilde{n}}(t)\left[\C  - \delta\left (1+ {U_{\tilde{n}, s_{\tilde{l}}(t)}(t)  \over \sum_s U_{\tilde{n}, s}(t) -\C B_{\tilde{n}}(t)} \right)  \right ] \overset{\text{(c)}}{\geq} p_{\tilde{l}}(t)Z_{\tilde{n}}(t)\left[\C  - \delta\left (1+ {U_{\tilde{n}, s_{\tilde{l}}(t)}(t)  \over  U_{\tilde{n}, s_{\tilde{l}}(t)}(t) -\C B_{\tilde{n}}(t)} \right)  \right ]\\
& \quad \overset{\text{(d)}}{\geq} p_{\tilde{l}}(t)Z_{\tilde{n}}(t)\left[\C  - \delta\left (1+ {\C\eimax + \alpha\delta\eimax  \over  \C\eimax + \alpha\delta\eimax-\C \eimax} \right)  \right ]  \overset{\text{(e)}}{\geq} p_{\tilde{l}}(t)Z_{\tilde{n}}(t)\left(\C\left (1-{1\over \alpha}\right)   -2 \delta\right) = 0. \\
\end{split}
\end{align}
\end{small}
Here, (a) follows  the same steps  as (a), (b) and (c) in \eqref{appProp:cont}, and (b) results from removing the negative terms in    $W_{\tilde{l}}(t)$. Moreover, the inequality  (c) comes from  $U_{\tilde{n}, s_{\tilde{l}}(t)}(t) \leq \sum_s U_{\tilde{n}, s}(t)$ and $U_{\tilde{n}, s_{\tilde{l}}(t)}(t) \geq\C B_{\tilde{n}}(t)$. The inequality $U_{\tilde{n}, s_{\tilde{l}}(t)}(t) \geq \C B_{\tilde{n}}(t)$ holds because, from part 2 in Lemma \ref{lem:AlgProp}, we have  $U_{\tilde{n}, s_{\tilde{l}}(t)}(t) \geq \Ut\geq \C\eimax$ and from  assumption \eqref{eq:claim3Ass1} we have $B_n(t) \leq \eimax$. To verify the inequality (d), note that the function ${x \over x-y}$ over the set $ \{(x,y): x\in \mathbb{R}^+,y\in \mathbb{R}^+, x >y\}$ is   decreasing  with respect to $x$ and  increasing  with respect to $y$.  Hence, (d) results from substituting $U_{\tilde{n}, s_{\tilde{l}}(t)}(t)$ with $\C\eimax + \alpha\delta\eimax   \leq \Ut$ and substituting $B_{\tilde{n}}(t)$ with $\C\eimax$. The last  equality (e) can be verified using $\C = \frac{2\delta}{1- \frac{1}{\alpha}}$. The result in \eqref{appProp:contBat} shows that the power vector $\bm{p}(t)$ with nonzero $p_{\tilde{l}}(t)$ cannot be the optimal solution of \eqref{eq:MW}. Hence, the third claim is proved. 
%
%
%
%


\vspace{-0.20cm}

\section{Proof of Lemma \ref{lem:driftB}}\label{app:proofLem:driftB}
Considering \eqref{eq:defFtR},  the intended result in  \eqref{eq:tPerformance} will be proved if we show that\vspace{-0.15cm}
\begin{align}\label{eq:boudForOvercharge}
\C\left(\Eh-\ei\right) Z_n(t) \leq \mathcal{B}_1,
\end{align}
and that  \alg{}  solves\vspace{-0.25cm}
\begin{mini}
{\substack{\bm{w}(t),P_{\text{AP}}(t),\bm{p}(t), R_{l,s}(t),\Te,\Td }} {\tilde{F}(t)}{\label{prob:Ft}}{}
\addConstraint{\eqref{eq:dataRelatedConstraint},\eqref{eq:energyRelatedConstraint},\eqref{eq:timingConstraint},}
\end{mini}
in each time-slot. First, we  show that  \eqref{eq:boudForOvercharge} holds.  Assuming $Z_n(t)  > (\C\eimax +\uimax)$,   we have\vspace{-0.15cm}
$$ (Z_n(t)- \uimax) /\C > \eimax \geq \Eh. \vspace{-0.15cm}$$
Thus, from \eqref{eq:inEnergy}, we have $\ei = \Eh$,  under which   \eqref{eq:boudForOvercharge} holds. Considering $Z_n(t)  \leq (\C\eimax +\uimax)$, we have \vspace{-0.15cm}
$$\C\left(\Eh-\ei\right) Z_n(t) \leq \C\Eh Z_n(t) \leq  \C\eimax(\C\eimax +\uimax) = \mathcal{B}_1.\vspace{-0.15cm}$$
Hence, \eqref{eq:boudForOvercharge}  holds for every $Z_n(t)$. 

We now show that  \alg{} solves \eqref{prob:Ft}. Note that to minimize the expectations in \eqref{eq:FtdefR}, it suffices to minimize the inner terms of the expectation for every given CSI. The structure of $\tilde{F}(t)$ in \eqref{eq:FtdefR} reveals that it can be minimized over the control variables separately. Specifically, irrespective of the time sharing, the power vector $\bm{p}(t)$ and the assigned rates $R_{l,s}(t)$   can be determined by solving \vspace{-0.25cm}
\begin{mini}
{\bm{p}(t), R_{l,s}(t) } {F_d(t)\triangleq \C\sum_{l=1}^L Z_{\Tnode}(t) p_l(t)-\sum_{l=1}^L \sum_{s=1}^S W_{l,s}(t)R_{l,s}(t)}{\label{prob:Fd}}{}
\addConstraint{\eqref{eq:dataRelatedConstraint},}
\end{mini}
and $\bm{w}(t)$ and $P_{\text{AP}}(t)$ can be determined by solving \vspace{-0.15cm}
\begin{mini}
{\bm{w}(t),P_{\text{AP}}(t) } {F_e(t) \triangleq P_{\text{AP}}(t) \bigg(V- \C\sum_{n=1}^N  |\bm{w}(t)\bm{h}_n^T(t)|^2 Z_n(t)\bigg)    }{\label{prob:Fe}}{}
\addConstraint{\eqref{eq:energyRelatedConstraint}.}
\end{mini}
It is straightforward to verify that the routing  and the data-link scheduling policies in \eqref{eq:routing} and \eqref{eq:MW}  solve \eqref{prob:Fd}.  Moreover, using eigenvalue decomposition and noting that  $Z_n(t)$ is nonnegative, it can be verified that the  energy-link scheduling policy in \eqref{eq:beamForming} and \eqref{eq:optimalPap} solves \eqref{prob:Fe}. Accordingly,  $F_d^\star(t)$ and $F_e^\star(t)$ are the optimal values of  problems \eqref{prob:Fd} and \eqref{prob:Fe}, respectively, and  the  optimal  time sharing is determined by the solution of\vspace{-0.15cm}
\begin{mini}
{\Td,\Te} {\Td F_d^\star(t) + \Te F_e^\star(t)}{\label{prob:Ft}}{}
\addConstraint{\eqref{eq:timingConstraint}},
\end{mini}
which is given by \eqref{eq:timeSharing}. This completes the proof of Lemma \ref{lem:driftB}.


\vspace{-0.20cm}
\section{Proof of Theorem \ref{th:performanceTh} }\label{sec:PrrofThPerf}

We follow the Lyapunov optimization method in \cite[Section 4]{Neely2010} to  prove  Theorem \ref{th:performanceTh}.  For this reason, we first define \vspace{-0.35cm}
\begin{mini!}
{\substack{\bm{w}(t),P_{\text{AP}}(t),\bm{p}(t), \\R_l^s(t), \Te,\Td }} { \lim_{T \rightarrow \infty}\frac{1}{T}\sum_{t=0}^{T-1}\mathbb{E}\left\{E_{\text{AP}}(t)\right\}} {\label{prob:mainProbMDef}}{\bar{E}^{\text{opt} }(\bm{\lambda}) \triangleq }
\addConstraint{\limsup_{T\rightarrow \infty} {1\over T}\sum_{t=0}^{T-1} \eexp{\eo} \leq \limsup_{T\rightarrow \infty} {1\over T}   \sum_{t=0}^{T-1} \eexp{\ei} \;\; \forall n\label{eq:avgBatteryConstraint}}
\addConstraint{\eqref{eq:stableConstraint}, \eqref{eq:dataRelatedConstraint}, \eqref{eq:energyRelatedConstraint},\eqref{eq:timingConstraint}.}
\end{mini!}
The defined problem  in \eqref{prob:mainProbMDef} is similar to  \eqref{prob:mainProbDef} except that the battery constraint \eqref{eq:batteryConstraint} is replaced with a less restrictive constraint on the average energy consumption \eqref{eq:avgBatteryConstraint}. Hence, we have $\bar{E}^{\text{opt}} (\bm{\lambda})\leq E^{\text{opt}}(\bm{\lambda})$, with $E^{\text{opt}}(\bm{\lambda})$ being the solution to Problem \eqref{prob:mainProbDef}. The new Problem \eqref{prob:mainProbMDef} follows the same   framework as introduced  in \cite[Eq. (4.31)-(4.35)]{Neely2010}.  According to  \cite[Theorem 4.5]{Neely2010}, for every  $\sigma >0$ and for data arrival rate $\bm{\lambda}+ \bm{\epsilon}$ with $\epsilon< {\epsilon}_{\max}$, there is a stationary policy that  is only a  function of  the instantaneous CSI and the data arrivals, and  satisfies \eqref{eq:dataRelatedConstraint}, \eqref{eq:energyRelatedConstraint}, \eqref{eq:timingConstraint}. Under the stationary policy in  \cite[Theorem 4.5]{Neely2010}    in each time-slot $t$ we have
\begin{align}
\begin{split}
\eexp{E_{\text{AP}}(t)} &\leq \bar{E}^\text{opt}(\bm{\lambda}+\bm{\epsilon}) + \sigma,\\
\lambda_{n,s}+\epsilon + \eexp{\Td \sum_{l\in\In_n} R_{l,s}(t)} &\leq \eexp{\Td\sum_{l\in\Out_n} R_{l,s}(t)} + \sigma\;\; \forall n,s,\\
\eexp{ \eo  }& \leq \eexp{ \ei  }+ \sigma \;\; \forall n.
\end{split}
\label{eq:rndPolicyProp}
\end{align}

Note that  \cite[Theorem 4.5]{Neely2010} only states the  existence  of a  stationary policy with properties in \eqref{eq:rndPolicyProp} while it does not derive such policy. However, using the properties in \eqref{eq:rndPolicyProp}  and following  the steps in \cite[Theorem 4.2]{Neely2010} and \cite[Theorem 4.8]{Neely2010},  Theorem  \ref{th:performanceTh} can be proved. We  sketch the outline of the proof for the readers convenience. 
From Lemma \ref{lem:upperBound} and the fact that $e_{n,s}(t) = 0$ under  \alg{}, we have \vspace{-0.15cm}
\begin{align}\label{eq:keyMDPP}
\begin{split}
\Delta(L(t)) &+V \eexp{   E_{\text{AP}}(t)| \bm{U}(t), \bm{B}(t)}  \leq \mathcal{B}_0 + F(t).
\end{split}
\vspace{-0.15cm}
\end{align}
Let $F^\text{stat}(t)$ denote the value of $F(t)$ under the stationary policy satisfying  \eqref{eq:rndPolicyProp}. Using  \eqref{eq:rndPolicyProp} in \eqref{eq:defFt}  with $\sigma \rightarrow 0$ and noting that $\bar{E}^{\text{opt}} (\bm{\lambda}+\bm{\epsilon})\leq E^{\text{opt}}(\bm{\lambda}+\bm{\epsilon})$, it can be verified that\vspace{-0.15cm}
\begin{align}\label{eq:inequalityStat}
F^\text{stat}(t) \leq V E^\text{opt}(\bm{\lambda}+\bm{\epsilon}) - \epsilon \sum_{n,s}U_{n,s}(t).
\end{align}
From \eqref{eq:tPerformance}, we have $F(t) \leq \mathcal{B}_1 + F^{\text{min}}(t) \leq  \mathcal{B}_1 + F^{\text{stat}}(t)$, which together with  \eqref{eq:keyMDPP} and  \eqref{eq:inequalityStat} imply\vspace{-0.15cm}
\begin{align}\label{eq:keyMDPP_2}
\Delta(L(t)) &+V \eexp{   E_{\text{AP}}(t)| \bm{U}(t), \bm{B}(t)} \leq \mathcal{B}_2 + V E^\text{opt}(\bm{\lambda}+\bm{\epsilon}) - \epsilon \sum_{n,s}U_{n,s}(t),
\end{align}
with $\mathcal{B}_2 =  \mathcal{B}_0+ \mathcal{B}_1$. Taking expectation with respect to $ \bm{U}(t)$ and $\bm{E}(t)$ from both sides in \eqref{eq:keyMDPP_2} results in\vspace{-0.15cm}
\begin{align}\label{eq:keyMDPP_3}
\eexp{L(t+1)}& - \eexp{L(t)}  + V\eexp{E_{\text{AP}}(t)}  \leq \mathcal{B}_2+VE^\text{opt}(\bm{\lambda}+\bm{\epsilon}) - \epsilon \sum_{n,s}  \eexp{U_{n,s}(t)}.
\end{align}
Summing both sides of \eqref{eq:keyMDPP_3} over $t = 0,\ldots, T-1$ yields\vspace{-0.15cm}
\begin{align}\label{eq:keyMDPP_4}
&\eexp{L(T-1)} - \eexp{L(0)} + V\sum_{t=0}^{T-1}\eexp{E_{\text{AP}}(t)} \leq \big(\mathcal{B}_2+ VE^\text{opt}(\bm{\lambda}+\bm{\epsilon})\big)T - \epsilon \sum_{t=0}^{T-1} \sum_{n,s} \eexp{U_{n,s}(t)}.
\end{align}
By rearranging the terms in \eqref{eq:keyMDPP_4} and dropping the negative terms whenever appropriate, we would have \vspace{-0.25cm}
\begin{align}
{1 \over T}\sum_{t=0}^{T-1}\eexp{E_{\text{AP}}(t)} &\leq {\mathcal{B}_2\over V} +E^\text{opt}(\bm{\lambda}+\bm{\epsilon}) + {\eexp{L(0)}\over T},\label{eq:opt1}\\
{1 \over T} \sum_{t=0}^{T-1} \sum_{n,s} \eexp{U_{n,s}(t)} &\leq {\mathcal{B}_2+VE^\text{opt}(\bm{\lambda}+\bm{\epsilon}) \over \epsilon} +  {\eexp{L(0)}\over T}.\label{eq:stab1}
\end{align}
{The bounds in \eqref{eq:opt1} and \eqref{eq:stab1} can be separately optimized over values of $ \epsilon \in (0, \epsilon_{max}] $. Letting $\epsilon \rightarrow 0$ in \eqref{eq:opt1} and  $\epsilon = \epsilon_{max}$  in \eqref{eq:stab1} and  taking limits as $T\rightarrow \infty$ concludes the  claims of Theorem \ref{th:performanceTh}.}

\vspace{-0.25cm}

\section{Proof of Theorem  \ref{sec:attrTheorem}}\label{sec:appAtrractionPoint}
Let $D(t)$ denote the distance between $[\bm{U}(t),\bm{Z}(t)]$ and $[\bm{\nu}^\star, \bm{\zeta}^\star]$, i.e.,
$
D(t)  = \lVert [\bm{U}(t), \bm{Z}(t)] -[\bm{\nu}^\star, \bm{\zeta}^\star] \rVert.
$
To prove  Theorem  \ref{sec:attrTheorem}, we need Lemma  \ref{eq:lemmaAttr}, which bounds the variation of  $D(t)$ in successive slots.
\begin{lemma}\label{eq:lemmaAttr}
With  \alg{}, there is a constant $\tilde{\mathcal{K}} >0$ such that for time-slot $t$ we have\vspace{-0.15cm}
\begin{align}\begin{split}
\eexp{D^2(t+1) - D^2(t) |   \bm{U}(t),\bm{Z}(t)} \leq \mathcal{B}_2 - 2\tilde{\mathcal{K}}D(t).
\end{split}\label{eq:lemmaAttrBound}
\end{align}

\begin{proof}
Let $Y(t) \triangleq \eexp{D^2(t+1) - D^2(t) |   \bm{U}(t),\bm{Z}(t)} $ from \eqref{eq:LemmaUpperBoundInnerLemma} with $\bm{\zeta} = \bm{\zeta}^\star$, $\bm{\nu} = \bm{\nu}^\star$ and $e_{n,s}(t)=0$, we have\vspace{-0.15cm}
\begin{small}
\begin{align}\label{eq:lemmaAttrE1}\begin{split}
Y(t)& \leq \mathcal{B}_0  +2\sum_{n,s}\mathbb{E} \bigg\{ \left(  U_{n,s}(t) -\nu_{n,s}^\star\right)\left(\ui-\uo\right)  \bigg\} +\\
&\quad 2 \sum_n\mathbb{E} \left\{ \left(Z_{n}(t) - \zeta_{n}^\star \right)\left(\sum_s\left(\ui - \uo\right) - \C(\ei-\eo)\right)\right\}.
\end{split}
\end{align}
\end{small}
Adding and subtracting $2\eexp{Vp_\text{AP}(t)}$ to the left hand side  in \eqref{eq:lemmaAttrE1} and rearranging the terms, we obtain\vspace{-0.15cm}
\begin{small}
\begin{align}\label{eq:lemmaAttrE2}
\begin{split}
Y(t) &\leq \mathcal{B}_0 + 2\mathbb{E}\bigg\{VP_\text{AP}(t)+  \sum_{n,s}U_{n,s}(t)\left(\ui -\uo\right) +\\
&\quad \qquad \sum_{n} Z_{n}(t)\left(\sum_s\left(\ui-\uo\right) - \C\left(\ei - \eo\right)\right)\bigg\}-\\
&\quad2\mathbb{E}\bigg\{  VP_\text{AP}(t)+   \sum_{n,s} \nu_{n,s}^\star\left(\ui-\uo\right) +\sum_{n} \zeta_{n}^\star\left(\sum_s\left(\ui-\uo\right) - \C\left(\ei-\eo\right)\right)\bigg\}.
\end{split}
\end{align}
\end{small}
Let $\bm{N}(t)$ and $\bm{B}(t)$ denote the vectors $[N_{n,s}(t) \;\forall(n,s)]$ and $[B_{n}(t) \;\forall n]$ where\vspace{-0.15cm}
\begin{align}\label{eq:ThAttrPNB}
\begin{split}
N_{n,s}(t) &\triangleq U_{n,s}(t) + Z_{n}(t),\\
B_n(t) &\triangleq \C  Z_{n}(t).
\end{split}
\end{align} 
Using \eqref{eq:ThAttrPNB},  $\eta_{n,s}^\star = \nu_{n,s}^\star + \zeta_{n}^\star$ and $\beta_n^\star = \C{\zeta_{n}^\star}$, \eqref{eq:lemmaAttrE2} can be rewritten as \vspace{-0.15cm}\begin{small}
\begin{align}\label{eq:lemmaAttrE3}
\begin{split}
Y(t) &\leq \mathcal{B}_0 +    2\eexp{VP_\text{AP}(t)+  \sum_{n,s}N_{n,s}(t)\left(\ui-\uo\right) - \sum_{n} B_n(t)\left( \ei-\eo\right)}\\
&\qquad-2\eexp{  VP_\text{AP}(t)+   \sum_{n,s} \eta_{n,s}^\star \left(\ui-\uo\right) -\sum_{(n)} \beta_n^\star \left(\ei-\eo\right)}.
\end{split}
\end{align}
\end{small}
From \eqref{eq:tPerformance},  it can be verified that  \alg{} approximately minimizes the first expectation in the right hand side of  \eqref{eq:lemmaAttrE2} and accordingly  \eqref{eq:lemmaAttrE3}. Hence, we can write\vspace{-0.15cm}
\begin{small}
\begin{align}\label{eq:lemmaAttrE4}
\begin{split}
Y(t) &\leq \mathcal{B}_0+\mathcal{B}_1 +    2 \min_{     \substack{ \bm{w},P_{\text{AP}}, \bm{p},\\C_{(l,s)},   \Te, \Td }    }  \eexp{VP_\text{AP}+  \sum_{n,s}N_{n,s}(t)\left(\ui-\uo \right)- \sum_{n} B_n(t) \left(\ei - \eo\right)}\\
&\qquad-2\min_{     \substack{ \bm{w},P_{\text{AP}}, \bm{p},\\C_{(l,s)},   \Te, \Td }      }\eexp{  VP_\text{AP}+   \sum_{n,s} \eta_{n,s}^\star(\ui-\uo) -\sum_{(n)} \beta_n^\star (\ei-\eo)}\\
& = \mathcal{B}_2 + 2 g\left([\bm{N}(t), \bm{B}(t)] \right) - 2 g\left( [\bm{\eta}^\star, \bm{\beta}^\star]   \right)\leq \mathcal{B}_2 - 2\mathcal{K}\lVert [\bm{N}(t), \bm{B}(t)] -[\bm{\eta}^\star, \bm{\beta}^\star] \rVert,
\end{split}
\end{align}
\end{small}
where the first equality holds according to \eqref{eq:gDef}, and the second inequality results from \eqref{eq:plolyHedral}.  Note that $[\bm{N}(t), \bm{B}(t)]$ is constructed from $[\bm{U}(t), \bm{Z}(t)]$ by a linear one-to-one transform. Hence, there is a constant $\tilde{\mathcal{K}} >0$ such that \vspace{-0.15cm}
\begin{align}\label{eq:ThAttrDistanceBound}
\lVert [\bm{U}(t), \bm{Z}(t)] -[\bm{\nu}^\star, \bm{\zeta}^\star] \lVert \leq {\mathcal{K}\over \tilde{\mathcal{K}}}\lVert [\bm{N}(t), \bm{B}(t)] -[\bm{\eta}^\star, \bm{\beta}^\star]\lVert.
\end{align}
Using \eqref{eq:ThAttrDistanceBound} in \eqref{eq:lemmaAttrE4} completes the proof of Lemma \ref{eq:lemmaAttr}.
\end{proof}
\end{lemma}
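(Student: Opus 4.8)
The plan is to deduce the $D^2$–drift bound directly from the generic one–step inequality \eqref{eq:LemmaUpperBoundInnerLemma} of Lemma \ref{lemm:LemmaUpperBoundInnerLemma}, specialized to the attraction point. First I would set $\bm{\nu}=\bm{\nu}^\star$ and $\bm{\zeta}=\bm{\zeta}^\star$ in \eqref{eq:LemmaUpperBoundInnerLemma} and take the conditional expectation given $[\bm{U}(t),\bm{Z}(t)]$. Because Lemma \ref{lem:AlgProp} guarantees $e_{n,s}(t)=0$ under \alg{}, the last (error) term of \eqref{eq:LemmaUpperBoundInnerLemma} vanishes, leaving $Y(t)\triangleq\eexp{D^2(t+1)-D^2(t)\mid\bm{U}(t),\bm{Z}(t)}$ bounded by $\mathcal{B}_0$ plus twice the conditional expectation of the flow differences $\ui-\uo$ and $\sum_s(\ui-\uo)-\C(\ei-\eo)$, weighted by $U_{n,s}(t)-\nu^\star_{n,s}$ and $Z_n(t)-\zeta^\star_n$, respectively.

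Next I would introduce the penalty by adding and subtracting $2V\eexp{E_{\text{AP}}(t)\mid\bm{U}(t),\bm{Z}(t)}$, which lets me group the right-hand side into two copies of the same functional of the control action: one weighted by the \emph{current} state and one by the \emph{fixed} target. The decisive observation is that, after the linear change of variables $N_{n,s}(t)=U_{n,s}(t)+Z_n(t)$ and $B_n(t)=\C Z_n(t)$ (and the matching identities $\eta^\star_{n,s}=\nu^\star_{n,s}+\zeta^\star_n$, $\beta^\star_n=\C\zeta^\star_n$), the state-weighted group is precisely $F(t)$ of \eqref{eq:defFt} and coincides with the infimand of the dual function $g$ in \eqref{eq:gDef} evaluated at the weight vector $[\bm{N}(t),\bm{B}(t)]$, while the target-weighted group is the infimand of $g$ at $[\bm{\eta}^\star,\bm{\beta}^\star]$ --- both evaluated at \alg{}'s action.

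The step I expect to carry the real weight is turning these two infimands into values of $g$. For the target group this is immediate: \alg{}'s action is feasible in the infimum defining $g([\bm{\eta}^\star,\bm{\beta}^\star])$, so that group is bounded below by $g([\bm{\eta}^\star,\bm{\beta}^\star])$. For the state group I would invoke Lemma \ref{lem:driftB}: since $F^{\min}(t)=g([\bm{N}(t),\bm{B}(t)])$ under the same change of variables, the bound \eqref{eq:tPerformance} gives $F(t)\le g([\bm{N}(t),\bm{B}(t)])+\mathcal{B}_1$, i.e.\ \alg{} approximately attains the infimum at the \emph{current} weights up to the additive slack $\mathcal{B}_1$. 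Combining the two estimates yields $Y(t)\le\mathcal{B}_2+2\big(g([\bm{N}(t),\bm{B}(t)])-g([\bm{\eta}^\star,\bm{\beta}^\star])\big)$ with $\mathcal{B}_2=\mathcal{B}_0+\mathcal{B}_1$; the subtlety here is that the infimum defining $g$ ranges over stationary CSI-only policies whereas \alg{} is state-dependent, so I must argue that for fixed weights the per-slot minimization is identical, which makes the identification legitimate.

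Finally I would apply the standing local-polyhedrality assumption \eqref{eq:plolyHedral}, which gives $g([\bm{N}(t),\bm{B}(t)])-g([\bm{\eta}^\star,\bm{\beta}^\star])\le-\mathcal{K}\lVert[\bm{N}(t),\bm{B}(t)]-[\bm{\eta}^\star,\bm{\beta}^\star]\rVert$, and then convert this $(N,B)$–distance back to $D(t)$. Since $[\bm{N}(t),\bm{B}(t)]$ is the image of $[\bm{U}(t),\bm{Z}(t)]$ under a fixed invertible linear map (and likewise for the targets), norm equivalence furnishes a constant $\tilde{\mathcal{K}}>0$, independent of the state and of $V$, with $\mathcal{K}\lVert[\bm{N}(t),\bm{B}(t)]-[\bm{\eta}^\star,\bm{\beta}^\star]\rVert\ge\tilde{\mathcal{K}}\,D(t)$, which delivers $Y(t)\le\mathcal{B}_2-2\tilde{\mathcal{K}}D(t)$ as claimed. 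The only care needed is to verify that this conversion constant can be chosen uniformly, so that the drift inequality holds with a single $\tilde{\mathcal{K}}$ for all $t$.
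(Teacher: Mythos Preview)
Your proposal is correct and follows essentially the same route as the paper: specialize Lemma~\ref{lemm:LemmaUpperBoundInnerLemma} at $[\bm{\nu}^\star,\bm{\zeta}^\star]$ with $e_{n,s}(t)=0$, add and subtract the penalty term, pass to the $(\bm{N},\bm{B})$ variables, use Lemma~\ref{lem:driftB} to replace the state-weighted block by $g([\bm{N}(t),\bm{B}(t)])+\mathcal{B}_1$ and bound the target-weighted block below by $g([\bm{\eta}^\star,\bm{\beta}^\star])$, then apply \eqref{eq:plolyHedral} and norm equivalence of the invertible linear map. Your extra remarks on why the per-slot minimization matches the infimum defining $g$ and on the uniformity of $\tilde{\mathcal{K}}$ simply make explicit what the paper takes for granted.
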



Note that \eqref{eq:lemmaAttrBound} implies that the distance between $[\bm{U}(t), \bm{Z}(t)]$ and  $[\bm{\nu}^\star,\bm{\zeta}^\star]$ does not grow large, that is,  the expected gradient of their distance $D(t)$ is negative when $D(t)$ is greater than $B^2 /2\tilde{\mathcal{K}}$.  Lemma \ref{eq:lemmaAttr} together with the exponential Lyapunov drift  analysis in  \cite[Theorem 1]{HuangDelay} imply that there are constants $D$, $c^\star$ and $\beta^\star$ such that\vspace{-0.15cm}
\begin{align}
&\limsup_{T\rightarrow \infty} {1\over T} \sum_{t=0}^{T-1} \Pr \{\exists (n,s): |U_{n,s}(t) - \nu_{n,s}^\star| > D+m \} \leq {c^\star e^{-\beta^\star m}},\label{eq:ThAttrCuP}\\
&\limsup_{T\rightarrow \infty} {1\over T} \sum_{t=0}^{T-1} \Pr \{\exists (n): |Z_{n}(t) - \zeta_{n}^\star| > D+m \} \leq {c^\star e^{-\beta^\star m}}.\label{eq:ThAttrCz}
\end{align}
 The proof of \eqref{eq:ThAttrCuP}  and \eqref{eq:ThAttrCz} is similar to  \cite[Theorem 1]{HuangDelay} and is omitted  for brevity. Here, we use \eqref{eq:ThAttrCuP} and \eqref{eq:ThAttrCz} to  prove \eqref{eq:ThAttrCe}. According to the definition of   $Z_{n,s}(t)$ in \eqref{eq:Zdef}, we have
\begin{align}\label{eq:LemmaAttrEZ}
B_n(t) = {\sum_s U_{n,s}(t) - Z_{n}(t) \over \C  }.\vspace{-0.15cm}
\end{align}
 Moreover, we have \vspace{-0.15cm}
\begin{align}\label{eq:LemmaAttrEpZet}
\varepsilon_n^\star =  {\sum_{s} \nu^\star_{n,s} - \zeta^\star_n \over \C  }. 
\end{align}
According to \eqref{eq:LemmaAttrEZ} and \eqref{eq:LemmaAttrEpZet}, $B_n(t)$ is within $((S+1)D+m)/ \C $ distance of $\varepsilon_n^\star$, whenever $U_{n,s}(t), \forall s,$ and  $Z_{n}(t)$ are  within $D+{m\over (S+1)} $ distance of $\nu_{n,s}^\star$  and  $\zeta_{n}^\star$, respectively. Hence, we have \vspace{-0.15cm}
\begin{align}\label{eq:LemmaAttrEVProb}\begin{split}
&\Pr \left\{  \exists n : | B_{n}(t) - \varepsilon_{n}^\star| > ((S+1)D+{m})/ \C \right\} \leq \\
&\quad \Pr\left\{ \exists (n,s): |U_{n,s}(t) - \nu_{n,s}^\star| >D+{m\over S+1} \right\} +  \Pr \left\{\exists n: |Z_{n}(t) - \zeta_{n}^\star| >D+{m\over S+1}\right \}.\end{split}
\end{align}
Summing both sides of \eqref{eq:LemmaAttrEVProb} over $t$, taking limit superior and using subadditivity property of $\limsup$, we obtain\vspace{-0.15cm}
\begin{align}\label{eq:LemmaAttrEVProb3}\begin{split}
&\limsup_{T \rightarrow \infty} \sum_{t=0}^{T-1} \Pr \{   \exists n : | B_{n}(t) - \varepsilon_{n}^\star| > ((S+1)D+{m})/ \C\} \leq\\
&\quad \limsup_{T \rightarrow \infty} \sum_{t=0}^{T-1} \Pr \{ \exists (n,s): |U_{n,s}(t) - \nu_{n,s}^\star| >D+{m\over S+1} \} +  \\
&\quad \limsup_{T \rightarrow \infty} \sum_{t=0}^{T-1}\Pr  \{\exists n:|Z_{n}(t) - \zeta_{n}^\star| >D+{m\over S+1}\} \leq{2c^\star e^{-\beta^\star {m\over S+1}}},\end{split}
\end{align}
where the last inequality holds due to the upper bounds  of \eqref{eq:ThAttrCuP} and \eqref{eq:ThAttrCz}. This completes the proof of Theorem \ref{sec:attrTheorem}. 
\vspace{-0.15cm}
\section{Proof of Lemma \ref{lem:limitedBattery}} \label{sec:appLimBB}

{Implementing  \alg{} in the cases with a limited battery and buffers, data is dropped due to  either low buffer space or energy outage. Let $L_{n,s}^b(t)$ and $L_{n,s}^e(t)$ denote the total number of data units that are dropped up to time  $t$ due to   low buffer space and  energy outage, respectively.  We have\vspace{-0.15cm}}
\begin{align}\label{eq:proofLB_deCom}
L_{n,s}(t) = L_{n,s}^b(t)+L_{n,s}^e(t).
\end{align}
{Moreover, let $\mathcal{M}_{n,s}(t)$ denote the total number of data units of stream $s$ that have entered  node $n$ up to time $t$ while the level of   $\tilde{U}_{n,s}(t)$ has been outside the  interval  of  length $\Uc$ around $\nu^\star$. We have\vspace{-0.15cm}}
\begin{align}\label{eq:proofLB_Lb_Bound}
\mathcal{M}_{n,s}(t) \leq \sum_{\tau=0}^t\uitau  \iif{|\tilde{U}_{n,s}(\tau) - \nu^\star| \geq \Uc /2 -\uimax},
\end{align}
where $\iif{x\geq a} = 1$ if $x \geq a$ and $ 0$ otherwise, for $x,a \in \mathbb{R}$. We claim that\vspace{-0.15cm}
\begin{align}\label{eq:proofLBClaim}
L_{n,s}^b(t) \leq \mathcal{M}_{n,s}(t) \leq\sum_{\tau=0}^t\uitau  \iif{|\tilde{U}_{n,s}(\tau) - \nu^\star| > \Uc /2 -\uimax} .\vspace{-0.15cm}
\end{align}
To prove \eqref{eq:proofLBClaim}, suppose that a genie-aided dropping discipline is used for dropping data units when overflow occurs.  The genie-aided algorithm  selects the data units to be dropped arbitrarily  among those data  units that  have  entered  node $n$ when the level of   $\tilde{U}_{n,s}(t)$ has been outside the  interval  of  length $\Uc$ around $\nu^\star$. With  the genie-aided algorithm, we have $L_{n,s}^b(t) < \mathcal{M}_{n,s}(t)$ since only the data units counted under $ \mathcal{M}_{n,s}(t)$ are candidates to be dropped.  The number of dropped data units is independent of the discipline for selecting  the dropped data units. Hence  \eqref{eq:proofLBClaim} always holds independently of the algorithm for dropping the data.

  Let $\hat{L}_{n}^e(t)$ denote the amount of energy that is not stored in the battery of node $n$ due to energy overflow. Using a similar argument for deriving \eqref{eq:proofLBClaim}, $\hat{L}_{n}^e(t)$  is  bounded by\vspace{-0.15cm}
\begin{align}\label{eq:proofLB_Le_Bound}
\hat{L}_{n}^e(t)  \leq \sum_{\tau=0}^t \eitau  \iif{|E_{n}(\tau) - \varepsilon^\star_n| > \Ec /2 -\eimax  }.
\end{align}
From \eqref{eq:rateBound}, we have $L_{(n,s)}^e(t)  \leq \delta \hat{L}_{n}^e(t)$ and, consequently,\vspace{-0.15cm}
\begin{align}\label{eq:proofLB_Le_Bound2}
L_{(n,s)}^e(t)  \leq \delta \sum_{\tau=0}^t \eitau  \iif{|E_{n}(\tau) - \varepsilon^\star_n| > \Ec /2 -\eimax  }.
\end{align} 
Taking time average expected  value of both sides in \eqref{eq:proofLB_deCom} and using the upper-bounds in \eqref{eq:proofLBClaim} and \eqref{eq:proofLB_Le_Bound2}, we obtain
\begin{small}
\begin{align}\label{eq:proofLBLast}\begin{split}
\limsup_{T\rightarrow \infty }& {1\over T} \eexp{L_{n,s}(T)} \leq \limsup_{T\rightarrow \infty } {1\over T}\sum_{\tau=0}^T\uimax  \eexp{\iif{|U_{n,s}(\tau) - \nu^\star| > \Uc /2 -\uimax}} + \\
&\qquad\limsup_{T\rightarrow \infty } {1\over T} \sum_{\tau=0}^T\delta \eimax  \eexp{\iif{|E_{n}(\tau) - \varepsilon^\star_n| > \Ec /2 -\eimax  }}\\
&\overset{(\text{a})}{\leq} \limsup_{T\rightarrow \infty } {1\over T}\sum_{\tau=0}^T\uimax  \Pr\{|U_{n,s}(\tau) - \nu^\star| > D+m_l\} + \\
&\qquad\limsup_{T\rightarrow \infty } {1\over T} \sum_{\tau=0}^T \delta\eimax  \Pr\{|E_{n}(\tau) - \varepsilon^\star_n| > ((S+1)D+m_l)/\C\}\overset{(\text{b})}{\leq}  {\uimax c^\star e^{-\beta^\star m_l} } + {2\delta\eimax c^\star e^{-\beta^\star {m_l\over S+1}} },\end{split}
\end{align}
\end{small}
where $(\text{a})$ holds since $\eexp{\iif{x>a}} = \Pr\{x>a\}$ and from \eqref{eq:m_lDef}  we have $D+m_l \leq \Uc /2 -\uimax$ and $((S+1)D+m_l)/\C\leq \Ec /2 -\eimax$. The last inequality $(\text{b})$ results from the upper-bounds of \eqref{eq:ThAttrCu} and \eqref{eq:ThAttrCe} with $m = m_l$.

\bibliographystyle{IEEEtran}
\bibliography{paper}

\end{document}